\documentclass[prl,nobalancelastpage,twocolumn,nolongbibliography,preprintnumbers]{revtex4-2}

\usepackage{color,amsthm,amsmath,amsxtra,amsfonts,dsfont,graphicx,bm,amssymb}
\usepackage[colorlinks=true,linkcolor=blue, citecolor=blue, urlcolor=blue, bookmarks]{hyperref}
\usepackage{centernot}
\usepackage[dvipsnames]{xcolor}
\usepackage{graphicx}
\usepackage{mathtools}
\usepackage{tikz}
\usepackage{braket}
\usepackage{multirow}
\usepackage{makecell}
\usepackage{dsfont}
\usepackage{outlines}
\usepackage{verbatim}

\newcommand{\Tr}{\textrm{Tr}}

\newtheorem{theorem}{Theorem}
\newtheorem{lemma}{Lemma}

\newcommand{\prlsection}[1]{{\em {#1}---~}}

\begin{document}
\preprint{MIT-CTP/6103}

\title{Robust Hamiltonian engineering with subensemble control}

\author{Wenjie Gong}
\email{wgong@mit.edu}
\author{Matteo Votto}
\email{votto@mit.edu}
\author{Soonwon Choi}
\email{soonwon@mit.edu}
\affiliation{Center for Theoretical Physics---a Leinweber Institute, Massachusetts Institute of Technology, Cambridge, MA 02139, USA}

\begin{abstract}
    We present a robust protocol to reshape interactions in a spin ensemble based on global control pulse sequences applied to multiple subensembles in parallel.  
    This setting arises naturally from ensembles of solid-state defects or multi-species atomic arrays.
    We show that it is provably computationally hard to find pulse sequences that simultaneously engineer interactions both within and between subensembles.
    Despite its formal hardness, we identify a set of necessary or sufficient conditions under which one can synthesize a target Hamiltonian from the native one.
    Moreover, we introduce efficient numerical strategies for designing pulse sequences that engineer target Hamiltonians robust against common control imperfections.
    As a specific application, we discuss the robust generation of two-mode spin squeezing in dual-species atomic ensembles, which is a challenging task without subensemble control.
    Our results provide a practical toolbox to design novel quantum simulation and sensing experiments with minimal control overhead.
\end{abstract}

\maketitle


\begin{figure}[t]
    \centering
    \includegraphics[width=\linewidth]{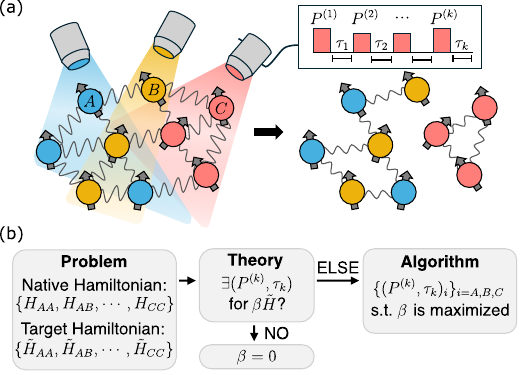}
    \caption{Hamiltonian engineering with subensemble control.
    (a) We consider interacting spin systems, where one can apply global control pulses to each of $n$ subensembles independent. As example, $n=3$ shown : $A$ (blue), $B$ (yellow) and $C$ (red). 
    We study how to engineer interactions within and between them in this setup.
    (b) Given a set of native interactions, we can determine the feasibility of engineering a target Hamiltonian.
    Moreover, we provide algorithms to design pulse sequences that optimize the strength of the resultant Hamiltonian.}
    \label{fig:protocol}
\end{figure}
The experimental realization of many-body Hamiltonians enables several applications, from the simulation of strongly correlated quantum matter~\cite{Blatt2012, Gross2017, Browaeys2020rydberg,Kjaergaard2020, Schafer2020, Altman2021, Morgado2021, Monroe2021, Burkard2023} to advanced sensing schemes~\cite{Degen2017, PezzeSmerzi2018,YeZoller2024,Montenegro_2025}.
While one strategy is to directly assemble quantum degrees of freedom tailored for specific target Hamiltonian, it is more desirable to engineer a wide range of different interactions within a single experimental setup in a programmable manner. 
A successful solution is \textit{pulsed Hamiltonian engineering}, where native interactions are reshaped by driving quantum spin degrees of freedom periodically with external field `pulses'~\cite{Leung01HamEng, Ajoy13HamEng, Wei18HamEng, Choi17qudit, Choi2020robust, Geier2021HamEng, Scholl22Rydberg, Miller2024Hameng_molecules, Zhou2024robustqudits, chen2026robust, Dodd02HamEng, Votto2024walsh, bassler2025robust}.
In principle, site-resolved pulse sequences enable the engineering of arbitrary Hamiltonians~\cite{Wocjan02, Dodd02HamEng, Bennett02HamEng, nielsen2002qudits, Bremner2005qudits, Votto2024walsh, bassler2025robust}.
However, to achieve such control is hard in many different physical platforms, such as solid-state spin systems~\cite{Doherty2013nv, Kucsko2018nv, Peng2021nv,Cheng2025nv, Cambria2025nv, Cakan2025hbn}, or atomic ensembles due to limited spatial resolution of controls in existing experiments~\cite{Raimond2001cavity,Braverman19cavity_squeezing, Muniz2020cavity,Li2022cavity, Li2023cavity, luo2025cavity}.
For this reason, most experimental implementations of Hamiltonian engineering have considered global control, where a system is driven by homogeneous fields~\cite{Choi2020robust,Geier2021HamEng,Scholl22Rydberg,kranzl2023ionsQ,kranzl2023ions,Miller2024Hameng_molecules,gao2025nv}, which severely constrains the realizable interactions~\cite{masanes2002global, Choi17qudit}.
A largely unexplored but experimentally demonstrated regime is \textit{subensemble control}, 
where a small number of subgroups of degrees of freedom are driven independently, see Fig.~\ref{fig:protocol}(a).
This has been demonstrated by parallel addressing of different crystalline orientations in NV centers~\cite{Kucsko2018nv,Cheng2025nv,Cambria2025nv}, using detuning masks in optical tweezers~\cite{gong2023,Bornet2024detuning} or multiple atomic species in trapped ion chains or atomic ensembles~\cite{anand2024dual,white2026dual,Bruzewicz2019ions,Moses2023ions}.
While subensemble control seems more powerful than global control, the extent of its advantage in pulsed Hamiltonian engineering remains unclear.
In this Letter, we show that the independent control of $n > 1$ subensembles of qudits allows one to engineer a much larger class of Hamiltonians than with global pulse sequences, and provide a protocol to design robust sequences suitable for near-term experiments.
First, we prove that determining whether or not a target Hamiltonian can be engineered from a native one with subensemble control is computationally hard in general.
Nevertheless, we provide a computable necessary condition for engineerability which is tight in some practical cases, as well as a sufficient one, uncovering a large class of realizable dynamics.
To improve on these results in practical settings, we provide controlled numerical methods to design pulse sequences that optimize the strength of target interactions.
Finally, we explicitly analyze the effects of common pulse imperfections, and develop an analytical framework to obtain robust sequences.
By doing so, we generalize previous results in related settings that lie at the core of successful experimental implementations of pulsed Hamiltonian engineering~\cite{Choi2020robust,kranzl2023ionsQ,kranzl2023ions,gao2025nv}.
We illustrate the utility of our approach by showing how two-mode spin squeezing~\cite{Gessner2020squeezing2,Kitzinger2020squeezing2,Mamaev2025squeezing2,kaubruegger2026squeezing2,Chu2026subensemble} can be robustly generated with atomic systems in either optical cavities or tweezers with control on two subensembles.


\prlsection{Engineerable Hamiltonians} 
For simplicity, let us first consider a system of $N$ qudits (\textit{i.e.} $d$-level systems) grouped in $n=2$ subensembles, $A$ and $B$, interacting under a native $2$-local Hamiltonian $H = H_{AA} + H_{BB} + H_{AB}$, with
\begin{equation}\label{eq:hamnative}
    H_{AA} = \sum_{(i < j)\in A} J_{ij} h_{ij}^{AA} ~,~ H_{AB} = \sum_{i\in A, j\in B} J_{ij} h_{ij}^{AB} \ ,
\end{equation}
and similarly for $H_{BB}$, where $J_{ij}$ are interaction strengths. Here,
\begin{equation}
\label{eq:int_mat}
     h_{ij}^{AA} = \sum_{\mu, \nu} g_{\mu\nu}^{AA} \lambda_i^{\mu}\otimes\lambda_j^{\nu} \ ,
\end{equation}
and equivalently for $ h_{ij}^{AB}$ and $ h_{ij}^{BB}$, where $\{\lambda^{\mu}\}_{\mu = 1}^{d^2-1}$ are generalized Gell-Mann operators, \textit{i.e.} qudit generalization of Pauli matrices~\cite{Choi17qudit}, and $g_{\mu\nu}$ are interaction matrices, uniquely specifying the type of interactions among particles. 
For example, for $d=2$, $g \propto \text{diag}(1, 1, -2)$ for dipolar interactions and $g \propto \text{diag}(1, 1, 1)$ for Heisenberg interactions.
We assume $g_{\mu \nu}$ is position-independent, and all distance-dependent interaction strengths are captured by $J_{ij}$. 
We will comment later on how to generalize our results to arbitrary $n$.
In pulsed Hamiltonian engineering, one engineers the dynamics of an effective Hamiltonian $\tilde{H}$ by interspersing the native dynamics under $H$ with a sequence of external field pulses $P^{(k)} = (\bigotimes_{i \in A} p_{A,i}^{(k)})\otimes( \bigotimes_{j \in B} p_{B,j}^{(k)})$ with $k\in \{1, \dots, K\}$.
Applying a pulse sequence $\{P^{(k)}\}$ with period $\tau = \sum_k \tau_k$, with $\prod_k P^{(k)} = I$ being the identity, and where $\{\tau_k\}$ are the time-spacings between the pulses, we obtain the dynamics
\begin{equation}\label{eq:floqu}
    \tilde{U}(\tau) = e^{-i H \tau_K} P^{(K)} e^{-i H \tau_{K-1}}P^{(K-1)}\cdots e^{-iH\tau_1}P^{(1)} \ .
\end{equation}
We note that this expression assumes an idealization: in practice, each pulse takes a finite time duration during which the qudits still interact under $H$.
We will relax this approximation later when considering pulse imperfections.
Using the Floquet-Magnus expansion~\cite{magnus1954exponential}, the unitary dynamics at integer multiples of $\tau$ is described by $(\tilde{U}(\tau))^{T/\tau} = e^{-i\tilde{H}T} + O(\tau T)$, where the effective Hamiltonian is $\tilde{H} = \tilde{H}_{AA} + \tilde{H}_{BB} + \tilde{H}_{AB}$, with $\tilde{H}_{AA} = \sum_{i<j}J_{ij}\tilde{h}_{ij}^{AA}$ and 
\begin{equation}
\label{eq:engH}
\begin{split}
    &\tilde{h}^{AA}_{ij} = \sum_k\frac{\tau_k}{\tau} (u^{(k)}_{A,i}\otimes u^{(k)}_{j,A})^{\dagger} h^{AA}_{ij} (u^{(k)}_{A,i}\otimes u^{(k)}_{j,A}) \ , \\
     &\tilde{h}^{AB}_{ij} = \sum_k\frac{\tau_k}{\tau} (u^{(k)}_{A,i}\otimes u^{(k)}_{j,B})^{\dagger} h^{AB}_{ij} (u^{(k)}_{A,i}\otimes u^{(k)}_{j,B}) \ ,
\end{split}
\end{equation}
where $u_{A,i}^{(k)} = \prod_{k'=1}^k p^{(k)}_{A,i}$ are toggling-frame unitaries, and equivalently for the remainder.
Noting that pulsed Hamiltonian engineering with local control is capable of generating arbitrary interactions, it may appear as if two-subensemble control could enable arbitrary inter-subensemble interactions $\tilde{h}_{ij}^{AB}$, up to a constant factor~\cite{Bennett02HamEng}.
This is not the case; the pulse sequence to engineer inter-subensemble interactions may affect interactions within each subensemble, making engineering Hamiltonians a nontrivial constrained problem.
For instance, let us consider $h_{ij} = X_i X_j + Y_i Y_j$ for every pair of qubits both within and between subensembles, where $X,Y,Z$ are Pauli matrices.
As we will prove later, engineering $\tilde{h}_{ij}^{AB} = \gamma Z_i Z_j$ while keeping $h^{AA}_{ij}$ and $h^{BB}_{ij}$ unchanged is impossible for $\gamma \neq 0$.
Such constraints make determining the engineerability of $\tilde{H}$ from a given $H$ computationally hard.
In particular, we prove that this is harder than deciding the separability of a two-qudit mixed state, known to be NP-hard in the dimension of the qudit $d$~\cite{SM, Horodecki1996entanglement,Gurvits2003locc,Fano1957, Fano1983, SP_2017}, as well as deciding membership in the polytope class $CUT^{\pm}_{n}$, which is NP-hard in the number of subensembles $n$~\cite{SM, Pitowsky1991}.
For $d=n=2$, the Hamiltonian engineering problem is closely related to determining whether two two-qubit mixed states are related by mixture of local unitaries, a currently unsolved problem~\cite{SM,Nielsen1999entanglement,Makhlin2002LU,li2011entanglement,lin2024entanglement,zhou2024LU}.
Despite this formal hardness, it is possible to rule out the engineerability of several Hamiltonians under the simultaneous control of subensembles.
\begin{theorem}
\label{th:necess}
    Given native interaction matrices $\{g^{AA}, g^{AB}, g^{BB}\}$, as defined in Eq.~\eqref{eq:int_mat}, and where $g^{AA}$ and $g^{BB}$ are symmetric, one can engineer target interaction matrices $\{\tilde{g}^{AA}, \tilde{g}^{AB}, \tilde{g}^{BB}\}$ where $\tilde{g}^{AA}$ and $\tilde{g}^{BB}$ are symmetric using subensemble control only if 
\begin{equation}
\label{eq:maj}
    \sum_{i=1}^{\ell}\lambda_i^{\downarrow}(\tilde{G})\leq \sum_{i=1}^{\ell}\lambda_i^{\downarrow}(G) \ ,
\end{equation}
for every $\ell\leq d^2-1$ and where 
\begin{equation}
\label{eq:coeffm}
    G = \begin{pmatrix}
        g^{AA} & g^{AB} \\ 
        (g^{AB})^T & g^{BB}
    \end{pmatrix} \ , 
\end{equation}
and similarly for $\tilde{G}$, and $\lambda_i^{\downarrow}(\cdot)$ is the $i$-th eigenvalue of a matrix in decreasing order.
\end{theorem}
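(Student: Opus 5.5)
Our plan is to write how the interaction matrices change under one toggling frame, and then average over the frames. Conjugation by a single-qudit unitary $u$ acts linearly on the generalized Gell-Mann operators:
\[
u^\dagger \lambda^\mu u=\sum_\nu R_{\mu\nu}(u)\,\lambda^\nu ,
\]
where $R(u)\in SO(d^2-1)$ is real orthogonal. This holds because conjugation preserves the trace and the Hilbert--Schmidt inner product on the traceless Hermitian operators. By Eq.~\eqref{eq:engH}, the engineered interaction matrices are then
\[
\tilde g^{XY}=\sum_k \frac{\tau_k}{\tau}\,R_X^{(k)T} g^{XY} R_Y^{(k)}, \qquad X,Y\in\{A,B\}.
\]
Here $R_A^{(k)}=R(u_A^{(k)})$ and $R_B^{(k)}=R(u_B^{(k)})$. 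We take the toggling frames to be uniform within each subensemble, since the pulses are global on each subensemble.

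The key observation is that the same rotation $R_A^{(k)}$ appears in the $AA$ and $AB$ blocks, and likewise $R_B^{(k)}$ in the $BB$ and $AB$ blocks. So, with $O^{(k)}=R_A^{(k)}\oplus R_B^{(k)}$, the three block equations combine into one matrix identity:
\[
\tilde G=\sum_k p_k\, O^{(k)T} G\, O^{(k)}, \qquad p_k=\tau_k/\tau .
\]
The weights satisfy $p_k\ge 0$ and $\sum_k p_k=1$, and each $O^{(k)}$ is an orthogonal matrix of size $2(d^2-1)$. The symmetry assumptions on $g^{AA}$ and $g^{BB}$ make $G$ and $\tilde G$ real symmetric, so their eigenvalues are real and the ordering $\lambda^\downarrow$ makes sense. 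This consistency of the rotations across blocks is the step that uses subensemble control specifically. It is also where we expect the most care to be needed: we must verify that the block structure and the transpose in the $(g^{AB})^T$ block are consistent, i.e., that the $BA$ block transforms as $R_B^{T}(g^{AB})^T R_A$.

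The spectral statement then follows from Ky Fan's maximum principle. For real symmetric $M$,
\[
\sum_{i=1}^\ell \lambda_i^\downarrow(M)=\max_{V^TV=I_\ell}\Tr(V^TMV).
\]
Because the maximum is taken over all isometries, it is unitarily invariant and convex in $M$. Therefore
\[
\sum_{i\le\ell}\lambda_i^\downarrow(\tilde G)\le \sum_k p_k\sum_{i\le\ell}\lambda_i^\downarrow(O^{(k)T}GO^{(k)})=\sum_{i\le\ell}\lambda_i^\downarrow(G),
\]
which is Eq.~\eqref{eq:maj} for every $\ell$. The statement is restricted to $\ell\le d^2-1$, but the argument in fact gives the inequality for all $\ell$ up to $2(d^2-1)$, and it gives equality of traces at $\ell=2(d^2-1)$.

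If the target is allowed to be engineered only up to a positive rescaling, we would absorb the scale into $\tilde G$. We would also remark that $\tilde G$ lies in the convex hull of the orbit of $G$ under $SO(d^2-1)\times SO(d^2-1)$. That convex hull is in general strictly smaller than the full orthogonal orbit, which is why the condition is only necessary.
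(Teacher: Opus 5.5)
Your proof is correct and follows the paper's argument. Both assemble the three block equations into $\tilde G=\sum_k p_k\,O^{(k)T}GO^{(k)}$ with block-diagonal orthogonal $O^{(k)}$, then use the fact that a convex combination of orthogonal conjugates is majorized by $G$. The paper simply cites that fact, whereas you prove it directly via Ky Fan's maximum principle; your remark that the inequality holds for all $\ell\le 2(d^2-1)$ is also right, and is what the paper actually uses in its $d=2$, $\ell=5$ example.
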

This follows from the fact that individual pulses act on $G$ as orthogonal rotations, and $\tilde{G}$ is a convex combination of them, as we discuss in the End Matter~\cite{masanes2002global,Choi2003convex,bhatia2013matrix}.
Thm.~\ref{th:necess} arises solely from the constrained structure of the problem, and needs to be satisfied together with conditions for global and local control on intra- and inter-subensemble interactions.
Applying this result to the example discussed above, we obtain $\sum_{i=1}^{5}\lambda_i^{\downarrow}(\tilde{G})=4+|\gamma|$ and $\sum_{i=1}^{5}\lambda_i^{\downarrow}(G)=4$, hence $\gamma = 0$.
Additionally, we prove the engineerability of a large class of Hamiltonians.
\begin{theorem}
\label{th:suff}
    Given native interaction matrices $\{g^{AA}, g^{AB}, g^{BB}\}$ such that $\text{tr}[g^{AA}] = \text{tr}[g^{BB}] = 0$, one can engineer using subensemble control any triple $\{\beta \tilde{g}^{AA}, \beta \tilde{g}^{AB}, \beta \tilde{g}^{BB}\}$ up to a rescaling factor $\beta>0$, as long as $\text{tr}[\tilde{g}^{AA}] = \text{tr}[\tilde{g}^{BB}] = 0$.
\end{theorem}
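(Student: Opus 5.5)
We will build the target triple as a convex combination of "toggling-frame" images of the native triple. The proof then reduces to showing that, after dividing by a positive constant, every traceless target lies in the convex hull of this orbit.

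\textbf{Action of a frame.} A toggling-frame pair $(u_A,u_B)$ with $u_A,u_B\in SU(d)$ acts on Gell-Mann operators through the adjoint representation, $u^\dagger\lambda^\mu u=\sum_\nu R_{\mu\nu}\lambda^\nu$. Here $R\in SO(d^2-1)$ lies in the adjoint group $\mathcal{R}\subset SO(d^2-1)$. Hence a frame sends
\[
(g^{AA},g^{AB},g^{BB})\mapsto(R_A^Tg^{AA}R_A,\;R_A^Tg^{AB}R_B,\;R_B^Tg^{BB}R_B),
\]
and by Eq.~\eqref{eq:engH} the engineered triple is $\sum_k\frac{\tau_k}{\tau}$ times such images.

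Because we are free to choose the $\tau_k$ and the toggling frames, and because $\prod_kP^{(k)}=I$ can always be enforced by appending a final pulse, the set of engineerable triples is exactly the convex hull $\mathcal{C}$ of the orbit of $(g^{AA},g^{AB},g^{BB})$ under $\mathcal{R}\times\mathcal{R}$. The statement therefore reduces to showing that the cone generated by $\mathcal{C}$ contains every triple with traceless diagonal blocks.

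\textbf{Key steps.}

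1. \emph{Zero lies in the interior.} Average over the Haar measure of $\mathcal{R}\times\mathcal{R}$; in practice a finite unitary $2$-design such as the generalized Clifford group suffices.
   \begin{itemize}
   \item Independence of $R_A$ and $R_B$ kills the mixed block, since $\int R_A\,dR_A=0$ for the adjoint representation.
   \item The diagonal blocks average to $\frac{\mathrm{tr}\,g}{d^2-1}\,\mathbb{1}$, which vanishes by the tracelessness assumption.
   \end{itemize}
   So $0\in\mathcal{C}$. To obtain any traceless target up to rescaling, it then suffices that $0$ lies in the relative interior of $\mathcal{C}$ and that the affine span of $\mathcal{C}$ is the whole space $V=\{\mathrm{tr}\,\tilde g^{AA}=\mathrm{tr}\,\tilde g^{BB}=0\}$. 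Then a small ball around $0$ in $V$ is contained in $\mathcal{C}$, and $\beta\tilde G$ with $\beta$ small enough lies in it.

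2. \emph{The linear span is $V$.} The span of the orbit is an $\mathcal{R}\times\mathcal{R}$-invariant subspace of $V$. Decompose $V$ into irreducibles under $\mathcal{R}\times\mathcal{R}$:
   \begin{itemize}
   \item the $AB$ block $\mathrm{ad}\otimes\mathrm{ad}$, irreducible as an external tensor product of irreducibles;
   \item the $AA$ block $(\mathrm{ad}\otimes\mathrm{ad})_0$ of traceless matrices, which splits into several $SU(d)$-irreducibles;
   \item the $BB$ block, likewise.
   \end{itemize}
   The span has a nonzero projection onto a given irreducible component iff the native triple does. This is the delicate point: if some native block vanishes, for example $g^{AB}=0$, the corresponding target block cannot be produced.

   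I will handle this in two parts. First, assume the natively nonzero blocks are generic, in the sense that they have nonzero components in every isotypic piece. Second, show that the remaining pieces can be reached by first applying a \emph{fixed} frame and then averaging. Concretely, conjugating by a nontrivial fixed $R$ and differencing with the identity frame generates the full $SU(d)$-module generated by each block. The $(\mathrm{ad}\otimes\mathrm{ad})_0$ decomposition has multiplicities, for example two copies of $\mathrm{ad}$ for $d\ge3$, so I will check that the orbit of one traceless $g$ spans all of them.

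3. \emph{Interior point.} Having $0=\int(\text{orbit})$ with a full-support measure, together with an orbit that spans $V$, gives $0\in\mathrm{relint}\,\mathcal{C}$. Replacing the Haar average by a finite design then yields explicit sequences with $\beta$ of order $1/\|\tilde G\|$, scaled by a constant that depends on the native triple.

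\textbf{Main obstacle.} The hard step is step 2: establishing the spanning property uniformly from the tracelessness assumption alone. In particular, the multiplicity issue in $(\mathrm{ad}\otimes\mathrm{ad})_0$ (symmetric versus antisymmetric pieces and the $d$-tensor piece) means that a single traceless $g^{AA}$ might lie in a proper invariant subspace.

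If this happens, my fallback is a direct construction. First, engineer each block separately by sequences in which the other subensemble is "decoupled": for the intra-$A$ block, let $B$ run through a $1$-design, which kills $AB$ and leaves $BB$ averaged to zero. Second, use the known global-control and local-control results to reach an arbitrary traceless $\tilde g^{AA}$, respectively an arbitrary $\tilde g^{AB}$. Third, concatenate these three blockwise sequences with weights $1/3$ each, which produces the full target with $\beta$ reduced by a factor of $3$.
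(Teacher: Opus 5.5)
\textbf{Verdict: genuine gap.} Your relative-interior framework (Steps 1 and 3) is sound, but Step 2 cannot be completed, and no extra trick will fix it. The obstruction you flag is real and refutes the statement as written.

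Whatever is done on $B$, the engineered intra-$A$ block is $\sum_k\frac{\tau_k}{\tau}R_A^{(k)T}g^{AA}R_A^{(k)}$. It therefore always lies in the cyclic $SU(d)$-submodule $M_{AA}\subseteq(\mathrm{ad}\otimes\mathrm{ad})_0$ generated by $g^{AA}$. Concretely:
\begin{itemize}
\item For $d=2$ and the dipolar $g^{AA}=\mathrm{diag}(1,1,-2)$, every engineered $\tilde g^{AA}$ is symmetric. The traceless antisymmetric target $X_iY_j-Y_iX_j$ is then unreachable for every $\beta>0$.
\item For $d\geq 3$ the same failure occurs even among symmetric targets. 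A symmetric traceless $g^{AA}$ in the $27$ of $SU(3)$ never reaches $\tilde g^{AA}_{ab}=d_{ab8}$, because $g\mapsto\sum_{ab}d_{abc}g_{ab}$ is equivariant.
\item If $g^{AB}=0$, no nonzero inter-subensemble target can be produced.
\end{itemize}

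Your proposed repair does not escape this. Differencing $R^Tg^{AA}R$ with $g^{AA}$ only produces more elements of $M_{AA}$, so it cannot reach the ``remaining pieces''. Your fallback fails at the same point. Global-control universality (an arbitrary traceless $\tilde g^{AA}$ from a traceless $g^{AA}$) holds only when $M_{AA}$ is all of $(\mathrm{ad}\otimes\mathrm{ad})_0$.

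There are two smaller slips in the fallback. First, a $1$-design on $B$ kills the $AB$ block, but it does not average $R_B^Tg^{BB}R_B$ to zero; that requires a $2$-design. Second, ``decoupling the other subensemble'' has no meaning for the $AB$ block. You need a sequence that kills $AA$ and $BB$ while keeping $AB$.

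\textbf{Comparison with the paper.} The paper's proof is essentially your fallback, with that last ingredient supplied:
\begin{itemize}
\item A $2$-design on $A$ alone kills $AA$ (by tracelessness) and $AB$ (by the $1$-design property). Alternating it with the same sequence on $B$ gives $\{g^{AA}/2,0,g^{BB}/2\}$.
\item The \emph{correlated} design $P^{*(k)}_AP^{*(k)}_B$ kills $AA$ and $BB$. It leaves $\frac{\mathrm{tr}\,g^{AB}}{d^2-1}$ times the identity in the $AB$ block, after a preliminary local frame change if $\mathrm{tr}\,g^{AB}=0$.
\item These are then composed with local- and global-control universality.
\end{itemize}
It therefore passes over the same obstruction through the global-control citation, and it tacitly needs $g^{AB}\neq 0$.

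\textbf{What your route does prove.} With Step 2 replaced by the correct spanning statement, your argument gives the sharp version of the theorem. The traceless $V_{AA}$, together with $V_{AB}$ and $V_{BB}$, carry pairwise inequivalent $SU(d)\times SU(d)$ irreducibles, and $\mathrm{ad}\boxtimes\mathrm{ad}$ is irreducible. Hence the span of the orbit is $M_{AA}\oplus M_{AB}\oplus M_{BB}$, with $M_{AB}=V_{AB}$ if and only if $g^{AB}\neq 0$. Since $0\in\mathrm{relint}\,\mathcal C$, this span is exactly the set of triples engineerable up to rescaling.

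So the theorem holds under an extra hypothesis: $g^{AB}\neq0$, and $g^{AA}$, $g^{BB}$ each generate all of $(\mathrm{ad}\otimes\mathrm{ad})_0$. Alternatively, it holds with targets restricted to $M_{AA}$ and $M_{BB}$. For example, for qubits with symmetric couplings, $M_{AA}$ is the whole symmetric traceless (spin-$2$) space as soon as $g^{AA}\neq 0$.

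Your route gives an exact characterization of the reachable cone but no explicit sequence. The paper's route gives explicit sequences, but it needs the same hypothesis to be stated.
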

We prove this in the End Matter by constructing pulse sequences that selectively decouple interactions either between or within subensembles while not affecting the others. 
This is possible only when the interaction matrices are traceless~\cite{dur2000design, Choi17qudit, SM, Mele2024haar}.
Then, these pulse sequences can be composed to realize any target interaction. 
This implies that the constraints of the Hamiltonian engineering problem solely arise from the trace invariance of $g^{AA,BB}$, \textit{i.e.} that the strength of the component proportional to the generalized Heisenberg Hamiltonian $h_{ij}^0= \sum_{\mu}\lambda^{\mu}_i\lambda^{\mu}_j$ cannot be changed.
Therefore, Thm.~\ref{th:suff} gives not only a formal guarantee, but also suggests a reformulation of the Hamiltonian engineering problem:
given $\tilde{H}$, one can separate the intra-subensemble Heisenberg component, and design the pulse sequence maximizing the rescaling factor $\beta$ of the remainder.
Our results extend naturally to $n > 2$ subensembles.
The decomposition in Eqs.~\eqref{eq:hamnative}-\eqref{eq:engH} generalizes to $n$ intra-subensemble interaction terms and $\frac{n(n-1)}{2}$ inter-subensemble terms.
For $n<N$, intra-subensemble interactions constrain the problem, and the Hamiltonian engineering problem remains NP-hard in $d$.
The necessary condition of Thm.~\ref{th:necess} extends directly to any subset of subensembles, resulting in a total of $O(2^n)$ additional conditions.
Moreover, Thm.~\ref{th:suff} generalizes to arbitrary $n$: selective decoupling can be now used to independently control all intra- and inter-subensemble couplings, implying that any such target Hamiltonian can be engineered up to an overall rescaling factor.
%


\prlsection{Numerical pulse sequence design}
We now provide controlled numerical techniques to design pulse sequences that engineer target interactions given subensemble control.
Suppose we want to engineer a set of intra- and inter-subensemble interactions $\{\tilde{h}^{\text{intra}, a}, \tilde{h}^{\text{inter}, b} \}$, labeled by the indices $a$ and $b$, that we can express as 
\begin{equation}
\label{eq:opt_problem}
    \tilde{h}^{\text{intra}, a} = \alpha_a h^0 + \beta_a \bar{h}^{\text{intra}, a} ~ , ~ \tilde{h}^{\text{inter}, b} = \gamma_b\bar{h}^{\text{inter}, b} \ ,
\end{equation} 
where $h^0_{ij}= \sum_{\mu}\lambda^{\mu}_i\lambda^{\mu}_j$, and $\alpha_a$, $\beta_a$, $\gamma_b$ are their associated coupling factors such that the matrices $\bar{h}^{\text{intra}, a}$ are traceless.
Then, Thm.~\ref{th:suff} tells us that, if we can engineer the interactions in Eq.~\eqref{eq:opt_problem} for some $\beta_a'\geq\beta_a$ and $\gamma'_b\geq\gamma_b$, we can also engineer them for $\beta_a, \gamma_b$. 
Therefore, pulse sequence design can be recasted as an optimization problem for the parameters $\{\beta_a, \gamma_b\}$.
We note that such an optimization will always be constrained by the NP-hardness of the problem. 
Nevertheless, we can formulate the search of a pulse sequence achieving $\text{max}(\beta_a, \gamma_b)$ in closed form by directly optimizing over the space of fixed $SU(d)$ pulses.
As $\tilde{h}$ is a convex combination of toggling-frame Hamiltonians (Eq.~\eqref{eq:engH}), we prove that any engineerable $\tilde{h}$ can be achieved with a pulse sequence of length $O((nd)^2)$~\cite{SM, bhatia2013matrix}.
By considering a sequence of this length, it suffices to optimize over a finite set of pulse parameters and intervals $\tau_k$.
Such formulation enables the direct implementation of established approximate optimization algorithms, such as differential evolution~\cite{Storn1997de, Das2011de}, which can provide solutions for modest values of $n$ and $d$.
In many cases, it is instead sufficient to restrict ourselves to a finite set of experimentally convenient pulses, enabling the use of linear programming (LP) algorithms~\cite{Bertsimas1997LP, Choi17qudit}.
While provably efficient, this approach requires a suitably expressive set of pulses.
More formally, if a solution to the Hamiltonian engineering problem with $\text{max}(\beta_a, \gamma_b)>0$ exists, it is desirable that LP always returns a sequence realizing some $\beta_a, \gamma_b>0$ using a fixed, finite pulse set.
We prove that a set of pulses $\{p^{(k)}\}$ satisfies this condition if it can generate a set of toggling-frame unitaries $\{u^{(k)}\}$ that forms a $4$-design~\cite{SM, Mele2024haar}.
This elucidates previous heuristic results: for instance, the icosahedral pulse sequence considered in Ref.~\cite{Ben2020iso} satisfies this property.
%


\begin{figure}[t]
    \centering
    \includegraphics[width=\linewidth]{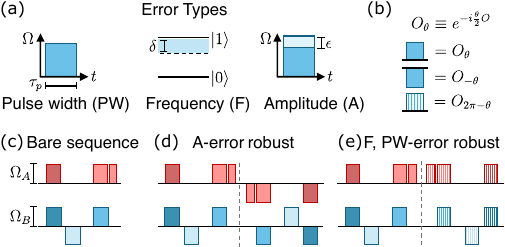}
    \caption{Robust pulse sequence design. 
    (a) We consider three common pulse-level coherent errors.
    (b) Our approach to robust pulse sequence design is based on introducing two kinds of complementary pulses for each pulse. Here, $O$ is a Pauli operator acting on the targeted two-level subspace. 
    (c-e) We depict the resulting robust pulse sequences using the notation introduced in (b).}
    \label{fig:robust}
\end{figure}
\prlsection{Robust pulse sequences}
So far, we have neglected experimental imperfections, which can drastically alter the resulting dynamics.
Indeed, the design of pulse sequences robust to relevant errors is crucial for their practical implementation~\cite{Choi2020robust,kranzl2023ionsQ,kranzl2023ions,Zhou2024robustqudits,Votto2024walsh,bassler2025robust}.
Here, we provide a universal method to make qudit pulse sequences robust to relevant errors, specifically amplitude (A), frequency (F), and pulse-width (PW) errors.
To achieve this, we will compile qudit pulses with a sequence of operations targeting two-level subspaces, as is necessary in most experiments~\cite{Zhou2024robustqudits}.
Hence, we express each elementary operation as a qubit pulse $O_{\theta} = e^{-i \frac{\theta}{2} O}$, where $O$ is a Pauli operator on the targeted two-level subspace, and $\theta > 0$.
A-errors act as multiplicative errors in the rotation angle $\theta \rightarrow \theta(1+\epsilon)$, with $\epsilon$ being a quasi-static error in the Rabi frequency of the driving field. 
Given a pulse sequence $\{O^{(k)}_{\theta^{(k)}}\}_{k=1}^{K}$, we prove that these errors can be mitigated by applying a reflected pulse sequence $\{O^{(K-k)}_{-\theta^{(K-k)}}\}_{k=1}^K$ every other period of the original one.
This results in a quadratic reduction of errors by applying an overall twice as long pulse sequence;
see Fig.~\ref{fig:robust}(d). 
F-errors manifest instead as a small detuning $\delta$ that adds a $Z$ component to pulses in the $x-y$ plane, \textit{i.e.} $\exp(-i\frac{\theta}{2}( X,Y+\delta  Z))$. 
Similarly, F-errors can be quadratically mitigated by alternating the original pulse sequence with another reflected as $\{O^{(K-k)}_{2\pi-\theta^{(K-k)}}\}_{k=1}^K$ (see Fig.~\ref{fig:robust}(e)). 
Composing these two strategies mitigates the two sources of errors simultaneously.
Finally, PW-errors are a result of the interaction between qudits under $H$ during the application of a pulse, that generally requires a finite time $\tau_p$.
When applying a pulse $O_{\theta}$, PW-errors contribute to the effective Hamiltonian with factors $\propto\frac{1}{\tau}\int_0^{\tau_p}(O_{\theta(t)}^{\dagger}\lambda^{\mu}_i O_{\theta(t)})(O_{\theta'(t)}'^{\dagger}\lambda^{\nu}_j O'_{\theta'(t)})dt$ with $\theta(0) = 0$, $\theta(\tau_p)=\theta$.
In general, this results in two qualitatively distinct error contributions.
The first contribution depends explicitly on $\theta$, and it can be corrected in first order in the same way as F-errors.
The second contribution is instead independent of the pulse shape $\theta(t)$; for $d=2$, where pulses act as $O_{\theta}^{\dagger}\lambda_i^{\mu} O_{\theta}=\cos(\theta)\lambda_i^{\mu} + \sin(\theta)\tilde{\lambda}_i^{\mu}$, it is proportional to $\tau_p(\lambda_i^{\mu}\lambda_j^{\nu}+\tilde{\lambda}_i^{\mu}\tilde{\lambda}_j^{\nu})$.
However, by leveraging the knowledge of its analytical form, it is possible to compensate for it by appropriately re-tuning the angles $\theta^{(k)}$ and the pulse distances $\tau_k$ numerically using LP algorithms~\cite{SM}.
These prescriptions are independent of the specific pulse sequence considered, and allow mitigation of several error sources in parallel.
In the next section, we provide practical examples of robust pulse sequences and study their performances.
%


\begin{figure}[t]
    \centering
    \includegraphics[width=\linewidth]{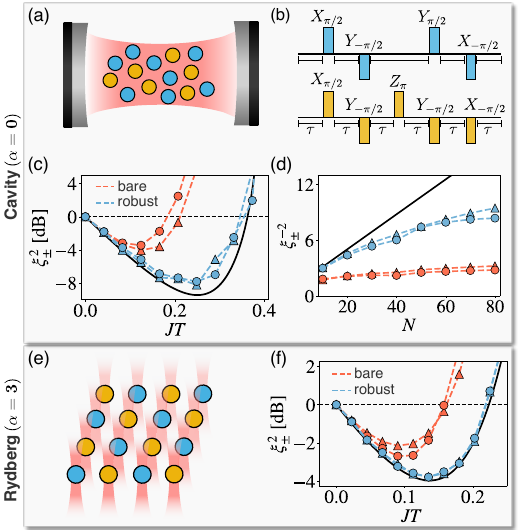}
    \caption{ 
    Two-mode spin squeezing via Hamiltonian engineering.
    We consider two possible experimental settings: (a-d) an atomic ensemble in an optical cavity and (e,f) dipolar Rydberg atoms in optical tweezers.
    In both cases, the desired target dynamics cannot be generated without subensemble control.
    (b) Alternating one-axis twisting interactions with the pulse sequence considered results in 2A2S twisting dynamics.
    (c) We study numerically the dynamics of the squeezing parameters $\xi_{\pm}(t)$ for an ensemble of $ N_{A,B} = 20$ atoms.
    We assume pulses with A- and F-error strengths $\epsilon=\delta=0.03$, comparing the bare pulse sequence (red lines, triangles for $\xi_+$ and circles for $\xi_-$) in (b) and its robust version (blue lines) with the exact dynamics of $\tilde{H}$ (black).
    (d) We also study the scaling of the maximal squeezing with $N$ in these cases.
    (e,f) We perform a similar analysis for the dynamics of the squeezing parameters for a $4\times 5$ tweezer array.
    In this case, we additionally include PW-errors with $\tau_p = \tau_k/4$.}
    \label{fig:squeezing}
\end{figure}

\prlsection{Application: multi-mode spin squeezing} 
We now apply our protocol to the preparation of multi-mode spin-squeezed states in atom arrays.
Let us consider a system of $N$ qubits in a collective spin state with $S=N/2$.
Spin-squeezed states are metrologically useful states characterized by a squeezing parameter $\xi<1$, defined as
\begin{equation}
    \xi^2 = \frac{N~\text{min}_{\perp}(\text{Var}(|\vec{S}_{\perp}|))}{|\langle \vec{S} \rangle |^2} \ , 
\end{equation}
where $\text{Var}(O) = \langle O^2\rangle-\langle O\rangle^2$, and the minimum is taken on the plane perpendicular to the collective spin $\langle \vec{S}\rangle = (\langle S^x \rangle, \langle S^y \rangle, \langle S^z \rangle)$~\cite{PezzeSmerzi2018}.
This results in a sensitivity $\delta\theta = \xi/\sqrt{N}$ to rotations $e^{-i\theta \vec{S}_{\perp}'}$, where $\vec{S}_{\perp}'$ is perpendicular to both $\langle\vec{S}\rangle$ and $\vec{S}_{\perp}$.

In the present work, we consider a multi-ensemble generalization of squeezed states, where it is possible to obtain a similar metrological enhancement for the simultaneous estimation of different parameters, such as the components of a vector field~\cite{Gessner2020squeezing2,Kitzinger2020squeezing2,Mamaev2025squeezing2,kaubruegger2026squeezing2,Chu2026subensemble}.
An example consists in evolving the state $ \ket{S^z=S}_A\otimes\ket{S^z=-S}_B$ of two collective spins $A$ and $B$ under the two-axis two-spin twisting (2A2S) Hamiltonian~\cite{Kitzinger2020squeezing2,Mamaev2025squeezing2,kaubruegger2026squeezing2}
\begin{equation}
\label{eq:2A2S}
    \tilde{H} = \tilde{J}(S^x_AS^x_B + S^y_AS^y_B) \ ,
\end{equation}
where $S^x$, $S^y$, $S^z$  are $x$, $y$ and $z$ spin operators.
This evolution is known to realize a class of two-mode spin-squeezed states with collective spin $S_-^z=  S^z_A-S^z_B$ and two squeezed components $S_{\pm}^{\perp} = (S_A^x\pm S_B^x)\pm(S_A^y-S_B^y)$, with associated parameters $\xi_{\pm}$.
In particular, both components can be squeezed to the Heisenberg limit, \textit{i.e.} $\xi_{\pm}^2\sim 1/N$, in time $O(\log(N)/N)$~\cite{Mamaev2025squeezing2, kaubruegger2026squeezing2}.
Crucially, engineering the Hamiltonian above requires subensemble control in existing experiments.
First, we consider the setting of an ensemble of $N_{A,B}$ atoms of two species $A,B$ in an optical cavity.
Optical cavities enable all-to-all interactions, resulting in one-axis twisting interactions $H = J (S_A^z +S^z_B)^2$ between two collective spins with $S = N_{A,B}/2$, treating each atom as a qubit~\cite{Li2022cavity,Li2023cavity}.
We can then engineer 2A2S interactions by addressing the species $A$ with the sequence $\{X_{\pi/2}, Y_{-\pi/2}, I, Y_{\pi/2}, X_{-\pi/2}, I\}$, and $B$ with $\{X_{\pi/2}, Y_{-\pi/2}, Z_{\pi}, Y_{-\pi/2}, X_{\pi/2}, I\}$, both with $\tau_k = \tau/6$.
We note that these sequences achieve the optimal $\tilde{J}=J/3$, saturating the necessary condition of Thm.~\ref{th:necess}.
We study spin squeezing generation by performing exact numerical simulations up to $N = 80$ atoms, $N_{A,B}=N/2$, and using $6$ cycles of the pulse sequence above to reach the optimal squeezing time~\footnote{We compute the optimal squeezing time by performing an exact numerical simulation of the 2A2S dynamics.}; see Fig.~\ref{fig:squeezing}(c,d).
We compare $\xi_{\pm}(t)$ for $N=40$ for the bare (red lines) and the robust pulse sequence (blue lines, designed using previous prescriptions), when explicitly including A- and F- errors~\footnote{PW-errors can be avoided by turning off $H$ during the application of the pulses, as routinely done in optical cavities.} with $\epsilon=\delta=0.03$, to the exact dynamics under $\tilde{H}$ (black line).
While we observe that pulse errors prevent the Heisenberg scaling $\xi_{\pm}^{-2}\sim N$ expected for the ideal dynamics, robust sequences can still achieve scalable spin squeezing up to $N=80$.
Finally, we further consider atomic arrays in optical tweezers realizing the Hamiltonian $H = \sum_{i<j}\frac{J}{|i-j|^{\alpha=3}}(X_iX_j+Y_iY_j)$ by encoding a qubit in two Rydberg levels of opposite parity for each atom~\cite{Browaeys2020rydberg}.
While such dynamics is equivalent to one-axis twisting for $\alpha=0$, recent works have theoretically and experimentally demonstrated that it can still attain scalable spin squeezing when $\alpha >0$~\cite{Comparin2022squeezing,Bornet2023squeezing,Block2024squeezing,koyluoglu2025squeezing}.
In this setting, two-mode spin squeezing can be generated by subensemble Hamiltonian engineering, which can be realized with static detuning masks~\cite{gong2023,Bornet2024detuning}.
By applying the sequences $\{X_{\pi/2}, Y_{-\pi/2}, I, Y_{\pi/2}, X_{-\pi/2}, I\}$ on $A$ and $\{X_{\pi/2}, Y_{-\pi/2}, X_{\pi}, Y_{-\pi/2}, X_{\pi/2}, I\}$ on $B$, we obtain the effective Hamiltonians
\begin{equation}
\begin{split}
    \tilde{H}_{AA,BB}&= \sum_{i<j}\frac{2J}{3|i-j|^{3}}(X_iX_j+Y_iY_j+Z_iZ_j)\\
    \tilde{H}_{AB}&= \sum_{i\in A, j\in B}\frac{2J}{3|i-j|^{3}}(X_iX_j+Y_iY_j)
\end{split}
\end{equation}
where the coupling constants are again optimal with respect to Thm.~\ref{th:necess}.
The inter-subensemble interactions now resemble 2A2S, while $\tilde{H}_{AA, BB}$ are Heisenberg Hamiltonians that stabilize the collective spin nature of each subensemble~\cite{koyluoglu2025squeezing}.
We numerically study the dynamics of two-mode spin squeezing generated by this dynamics in a rectangular lattice of $4\times 5$ atoms, where the sublattices $A$ and $B$ are alternated as in Fig.~\ref{fig:squeezing}(e), and report the results in Fig.~\ref{fig:squeezing}(f).
Remarkably, we observe two-mode spin squeezing up to $\xi_{\pm}^{-2}\sim4~\text{dB}$ under the exact dynamics (black line), comparable to the $\alpha=0$ case.
Using a robust pulse sequence with $J\tau_k = 3\times10^{-3}$, the same spin squeezing can be achieved even in presence of A- and F-errors with $\epsilon=\delta=0.03$, and PW-errors with $\tau_p=\tau_k/4$ for $\pi/2$-pulses,
where this choice of parameters is based on recent experiments~\cite{SM,Bornet2023squeezing}.
%


%
\prlsection{Discussion and outlook}
Our results unveil how a minimal control overhead, available in several experimental setups, is sufficient to engineer a large class of Hamiltonians.
Importantly, we also provide concrete prescriptions to realize these Hamiltonians in practice.
As an example, we numerically demonstrate two-mode spin squeezing generation in atomic systems addressing $n=2$ subensembles with robust pulse sequences.
Our approach may lead to practical realizations of higher-order Hamiltonian engineering~\cite{tyler2023higherorder, zhou2023higherorder, chen2026higherorder}, where in principle sufficiently long control sequences can attain universal quantum computation~\cite{hu2026optimalcontrol}.
Moreover, our robust pulse sequence design finds direct application in the implementation of digital quantum processors with global control~\cite{Cesa2023dual, Menta2025superconducting,white2026dual,Cesa2026dual}, where it is crucial to minimize physical error rates.

\let\oldaddcontentsline\addcontentsline
\renewcommand{\addcontentsline}[3]{}
\prlsection{Acknowledgements}                  
We acknowledge insightful discussions with Kenneth Wang, Ana Maria Rey, Bingtian Ye, Leon Zaporski, Giulia Semeghini.
We are grateful to Joonhee Choi and Hannes Bernien for valuable feedback on the manuscript.
We acknowledge financial support by
the ARO MURI grant W911NF25-1-0263, the NSF CAREER Award DMR-2237244, the Alfred P. Sloan Fellowship, and the Center for Ultracold Atoms (an NSF Physics Frontier Center, grant number PHY-2317134). 
WG is supported by the Hertz Foundation Fellowship.
MV acknowledges support from the European Union’s Horizon Europe research and innovation program under the Marie Sk{\l}odowska-Curie Action INDIANAQUSYS (Grant No. 101273040).

\bibliography{dual}

\onecolumngrid
\newpage
\twocolumngrid

\section{End matter}

Here we prove the necessary and the sufficient conditions for Hamiltonian engineering with subensemble control, stated in the main text as Thm.~\ref{th:necess} and Thm.~\ref{th:suff}. 
We begin each proof with the case of $n=2$ subensembles, and we generalize it afterwards.
Moreover, we prove how the symmetrization techniques in Fig.~\ref{fig:robust} provide robustness against amplitude (A), frequency (F), and pulse-width (PW) errors.

\subsection{Proof for the necessary conditions}

Let us first recall that local unitary transformations on a two-qudit Hamiltonian translate to orthogonal transformations on the interaction matrix~\cite{Choi17qudit}
\begin{equation}
\label{eq:orth}
\begin{split}
    (u_i \otimes v_j)^{\dagger} h_{ij} (u_i \otimes v_j) =\sum_{\mu\nu} [O_u^TgO_v]_{\mu\nu}\lambda_i^{\mu}\lambda_j^{\nu} \ ,    
\end{split}
\end{equation}
where $u_i, v_j$ are $SU(d)$ rotations, and $O_u, O_v$ are corresponding $SO(d^2-1)$ rotations.
Thereby, applying a pulse sequence in the $n=2$ subensemble case results in the effective interaction matrices
\begin{equation}
\label{eqend:convex_ortho}
\begin{split}
    \tilde{g}^{AA} &= \sum_k\frac{\tau_k}{\tau}O^{(k)T}_A g^{AA} O_A^{(k)} \ , \\ 
    \tilde{g}^{AB} &= \sum_k\frac{\tau_k}{\tau}O^{(k)T}_A g^{AB} O_B^{(k)} \ , 
\end{split}
\end{equation}
and similarly for $g^{BB}$, where $O_{A,B}^{(k)}$ are orthogonal matrices corresponding to the pulses $u^{(k)}_{A,B}$.
All these expressions can be grouped together in the compact form
\begin{equation}
\label{eqend:convex_ortho_AB}
    \tilde{G}^{AB} = \sum_k \frac{\tau_k}{\tau}O_{AB}^{(k)T}G^{AB}O_{AB}^{(k)}
\end{equation}
with
\begin{equation}
    G^{AB} = 
    \begin{pmatrix}
        {g}^{AA} & {g}^{AB} \\ ({g}^{AB})^T & {g}^{BB}
    \end{pmatrix}
    ~,~ O_{AB}^{(k)} = 
    \begin{pmatrix}
        O_{A}^{(k)} & 0\\
        0 & O_B^{(k)}
    \end{pmatrix}
    \ .
\end{equation}
When the matrices $g^{AA,BB}$ are symmetric, $G^{AB}$ is also symmetric. 
In this case, and given that $O_{AB}^{(k)}$ is an orthogonal matrix, Eq.~\eqref{eqend:convex_ortho} is equivalent to the majorization described in Thm.~\ref{th:necess}~\cite{Choi2003convex, bhatia2013matrix}, proving our result.
This result is analogous to the previously obtained necessary condition for global pulse sequences, as Eq.~\eqref{eqend:convex_ortho_AB} is equivalent to Eq.~\eqref{eqend:convex_ortho} for $g^{AA, BB}$~\cite{masanes2002global}, that need to be satisfied independently together with these conditions.
Moreover, Eq.~\eqref{eqend:convex_ortho_AB} cannot be a sufficient condition in general, as it neglects the block-diagonal structure of $O_{AB}^{(k)}$.
In fact, it is possible to construct examples where the necessary condition is satisfied, but Hamiltonian engineering would require nonzero off-block-diagonal terms in $O_{AB}^{(k)}$, akin to entangling operations between $A$ and $B$.
Let us now comment on the $n>2$ subensembles case.
Under the assumption that the native Hamiltonian is $2$-local, every term will only involve two subensembles.
Hence, one can write for every set of subensembles $S$
\begin{equation}
\label{eqend:convex_ortho_S}
    \tilde{G}^{S} = \sum_k \frac{\tau_k}{\tau}O_{S}^{(k)T}G^{S}O_{S}^{(k)}
\end{equation}
where $G^S$ has the intra-subensemble interaction matrices in $S$ in its diagonal blocks and interaction matrices between pairs of subensembles in $S$ in its off-diagonal blocks, while $O_S^{(k)}$ has the corresponding orthogonal rotations in its diagonal blocks.
For example, for $n=3$ we need to satisfy Eq.~\eqref{eqend:convex_ortho_S} for $S = ABC$, with
\begin{equation}
\label{eqend:blockG_S}
\begin{split}
    G^{ABC} &= 
    \begin{pmatrix}
        g^{AA} & g^{AB} & g^{AC} \\ (g^{AB})^T & g^{BB} & g^{BC} \\ (g^{AC})^T & (g^{BC})^T & g^{CC}
    \end{pmatrix}
\end{split}
\end{equation}
and $O_{ABC}^{(k)} = \text{diag}(O_A^{(k)}, O_B^{(k)}, O_C^{(k)})$, together with $S = AB, AC, BC$.
In the general case, this results in a family $2^n-n-1$ majorization conditions equivalent to Thm.~\ref{th:necess}.

\subsection{Proof for the sufficient condition}

As a first step, let us note that composing the pulse sequence $\{P^{(k')}, \tau'\}$ on top of the pulse sequence realizing the effective Hamiltonian $\tilde{H}$ results in
\begin{equation}
    \tilde{H}_{\text{comp}} = \sum_{k'}\frac{\tau'^{(k')}}{\tau'}U'^{(k')\dagger}\tilde{H}U'^{(k')} \ ,
\end{equation}
with $U'^{(k')} = \prod_{k''=1}^{k'} P'^{(k'')}$. 
By composing and alternating pulse sequences, proving Thm.~\ref{th:suff} for $n=2$ reduces to proving the existence of two pulse sequences that respectively achieve $\{g^{AA}, g^{AB}, g^{BB}\}\rightarrow \{0, \tilde{g}^{AB}, 0\}$ for a single $\tilde{g}^{AB}\neq 0$, and $\{g^{AA}, g^{AB}, g^{BB}\}\rightarrow \{\tilde{g}^{AA}, 0, \tilde{g}^{BB}\}$ with $\tilde{g}^{AA,BB}\neq 0$.
In fact, composing the first pulse sequence with another allows us to engineer arbitrary inter-subensemble interactions up to rescaling, since Hamiltonian engineering with local control is universal~\cite{Dodd02HamEng}.
Similarly, composing the second pulse sequence with another allows to engineer arbitrary intra-subensemble interactions given that $\text{tr}[g^{AA,BB}]=0$~\cite{masanes2002global} and that $A$ and $B$ are not interacting.
Let us now introduce a sequence $\{P^{*(k)}, \tau_k=\frac{1}{K}\}$ that generates a set of toggling-frame unitaries $\{u^{(k)}\}$ on individual qudits with a $2$-design property~\cite{Mele2024haar}.
Such pulse sequences have been constructed explicitly for arbitrary $d$~\cite{dur2000design, Choi17qudit}.
By applying $\{P^{*(k)}_A, \frac{1}{K}\}$ only on the subensemble $A$, we obtain $\tilde{h}_{BB} = h_{BB}$, and
\begin{equation}
\begin{split}
    \tilde{h}^{AB}_{ij} = \frac{1}{K}&\sum_k u_{A,i}^{(k)\dagger}h_{ij}^{AB}u_{A,i}^{(k)} = \frac{1}{d}I_i\otimes\text{tr}_i[h^{AB}_{ij}] =  0 \ , \\
    &\tilde{h}^{AA}_{ij} = \frac{\text{tr}[g^{AA}]}{d^2-1}\sum_{\mu}\lambda^{\mu}_i\otimes\lambda^{\mu}_j = 0 \ ,
\end{split}
\end{equation}
where $\text{tr}_i(\cdot)$ is a partial trace on the site $i$, and we have used the $2$-design property $\frac{1}{K}\sum_k f(u_k) = \int_{u\sim\text{Haar}} f(u)du$ up to second moment functions $f$, as well as the tracelessness of $g^{AA}$.
By alternating this pulse sequence with $\{P^{*(k)}_B, \frac{1}{K}\}$ applied on $B$, we fully decouple interactions in $AB$, while preserving the interactions in $AA$ and $BB$ half of the time.
Thus, we obtain $\{g^{AA}, g^{AB}, g^{BB}\}\rightarrow \{g^{AA}/2, 0, g^{BB}/2\}$, completing the first part of the proof.
Similarly, applying $\{P^{*(k)}_AP^{*(k)}_B, \frac{1}{K}\}$ results in $\tilde{h}^{AA,BB}_{ij}=0$, and
\begin{equation}
    \tilde{h}^{AB}_{ij} = \frac{\text{tr}[g^{AB}]}{d^2-1}\sum_{\mu}\lambda^{\mu}_i\otimes\lambda^{\mu}_j \ .
\end{equation}
When $\text{tr}[g^{AB}]\neq 0$, this sequence engineers $\{g^{AA}, g^{AB}, g^{BB}\}\rightarrow \{0, \tilde{g}^{AB}, 0\}$ with $\tilde{g}^{AB} \neq 0$, completing the proof.
Otherwise, it is always possible to compose $\{P^{*(k)}_AP^{*(k)}_B, \frac{1}{K}\}$ with any pulse sequence that yields $\{g^{AA}, g^{AB}, g^{BB}\}\rightarrow \{g'^{AA}, g'^{AB}, g'^{BB}\}$ such that $\text{tr}[g_{AB}']\neq 0$.
Given the universality of Hamiltonian engineering with local control, and the trace invariance of $g^{AA, BB}$, it is always possible to find such a pulse sequence.
For example, while $h^{AB}_{ij} = X_i X_j - Z_i Z_j$ has $\text{tr}[g^{AB}]=0$, it suffices to apply $X_{\pi}$ pulses on $A$ to get $h'^{AB}_{ij} = X_i X_j $ with $\text{tr}[g'^{AB}]=1$.
This proof technique can be extended to arbitrary $n$: it is sufficient to decouple interactions sequentially by composing the decoupling sequences above.
For instance, let us take $n=3$.
The sequence $\{P^{*(k)}_C, \frac{1}{K}\}$ decouples $AC$, $BC$ and $CC$ interactions, leaving $AA$, $AB$ and $BB$ interactions on.
At this point, the problem reduces to $n=2$ by composing pulse sequences; the same procedure can be repeated with $\{P^{*(k)}_A, \frac{1}{K}\}$ and $\{P^{*(k)}_B, \frac{1}{K}\}$.


\subsection{Robustness conditions}

Let us start from A-errors, which transform a pulse $O_\theta$ as $O_\theta \mapsto O_{\theta(1+\epsilon)} = O_{\theta\epsilon/2} O_\theta O_{\theta\epsilon/2}$, where $O_{\theta} = e^{-i\frac{\theta}{2} O}$ and $O$ is a Pauli operator acting on a chosen two-level subspace of a qudit.
Then, to leading order in $\epsilon$, the amplitude error can be regarded as a single-qudit error Hamiltonian evolution acting immediately before and after the ideal pulse. 
Each error Hamiltonians is then conjugated by the toggling frame unitaries $U^{(k)}=\bigotimes_ju_j^{(k)}$ induced by the pulse sequences, see Eq.~\eqref{eq:engH}.
When applying a pulse sequence $\{P^{(k)},\tau_k\}$ with $P^{(k)} = \bigotimes_j e^{-i\frac{\theta^{(k)}}{2}_jO^{(k)}_j} $, the total error contribution to the average Hamiltonian is 
\begin{align}
\label{eqend:ae}
    \delta \tilde H
    =
    \sum_{k,j}
    \frac{\epsilon\theta^{(k)}_j}{2\tau}
    \left(
    \hat{O}_j^{(k,k-1)}
    +
    \hat{O}_j^{(k,k)}
    \right) \ ,
\end{align}
where $\hat{O}_j^{(k,k')} = (U^{(k')})^\dagger  O_j^{(k)} U^{(k')}$.
The symmetrization prescription in Fig.~\ref{fig:robust} pairs every pulse $P^{(k)}$, which takes the toggling frame from $U^{(k-1)}$ to $U^{(k)}$, with an inverse pulse $O_{-\theta^{(k)}}^{(k)}$, which takes the toggling frame back from $U^{(k)}$ to $U^{(k-1)}$. Therefore, the reflected pulse sequence gives an error contribution equal to $-\delta\tilde H$, compensanting Eq.~\eqref{eqend:ae} exactly.

F-errors, on the other hand, add a $Z$ component to $X,Y$ pulses. 
They result in a similar error Hamiltonian as A errors, since $\exp(-i\frac{\theta}{2}( X+\delta  Z)) = \exp(i \frac{\delta}{2} Y) \exp(-i \frac{\theta}{2} X)\exp(-i \frac{\delta}{2} Y) + O(\delta^2)$, and similarly for $Y$ pulses. 
The resulting errors can be then compensated to leading order in $\delta$ using the symmetrization prescription in Fig.~\ref{fig:robust}.
Unlike the pulse-level errors discussed above, PW errors depend explicitly on the Hamiltonian. For simplicity, let us discuss the qubit case $d=2$, and extend to qudits in the Supplemental Materials~\cite{SM}. The error accumulated during the $k$-th pulse is
\begin{align}
\label{eqend:pwerr}
    \int_{0}^{\tau_p} U^{(k)\dagger}\left(P^{(k)}(t)^\dagger H P^{(k)} (t)\right)U^{(k)}  dt\ .
\end{align}
where $P(t) = \bigotimes_j O_{\theta_j(t)}$ is a partially applied pulse for a time $t\leq\tau_p$, and  $\theta(t) = \theta t/\tau_p$.
Since single-qudit operators in the Hamiltonian transform as
\begin{equation}
\label{eqend:pwrot}
\begin{aligned}
    O^{\dagger}_{\theta(t)}  \lambda^{\mu} O_{\theta(t)}
    = \cos(\theta(t))\lambda^{\mu}  +  \sin(\theta(t)) \frac{i[O,\lambda^{\mu}]}{2} \ ,
\end{aligned}
\end{equation}
each two-body term in the Hamiltonian results in squared terms $\sim \sin^2(\theta(t))$ and $\sim \cos^2(\theta(t))$, and cross-terms $\sim \sin(\theta(t))\cos(\theta(t))$. 
The contribution from the cross-terms can be corrected via the same symmetrization technique used for F-errors.
Specifically, if two qudits are subject to a same-angle $\theta$-pulse with width $\tau_p$, then during the application of the pulse they will be rotated of an angle $\theta(t) = \theta t/\tau_p$, and the reflected pulse rotates in the opposite direction for a time $(2\pi/\theta - 1)\tau_p$.
Pairing such terms in the error Hamiltonian then suffices to cancel them at the leading order in $\tau$, since
\begin{equation}
    \int_{(\frac{2\pi}{\theta}-1)\tau_p}^{\tau_p}\sin\left(\frac{\theta t}{\tau_p}\right)\cos\left(\frac{\theta t}{\tau_p}\right)dt =\int_0^{2\pi}\sin(2\theta')d\theta' = 0 \ .
\end{equation}
When two qudits are subject to pulses with different angles at the same time, it is then possible to separate the longer pulse in two: the first pulse is applied in parallel to the shorter pulse, and the other is applied independently.
The robustness condition above then applies straightforwardly.
Instead, the squared errors will always have nonzero time averages, but they can be mitigated by adjusting the pulse spacing to compensate for the additional induced Hamiltonian terms. 

\let\addcontentsline\oldaddcontentsline

\onecolumngrid
\newpage

\appendix
\setcounter{equation}{0}
\setcounter{figure}{0}
\setcounter{page}{1}
\renewcommand{\thefigure}{S\arabic{figure}}

\setcounter{secnumdepth}{2}

\begin{center}
    {\large \bf Supplemental Materials: Robust Hamiltonian engineering with subensemble control}
\end{center}

\tableofcontents

\section{Complexity of the Hamiltonian engineering problem}

In this section, we prove that determining the engineerability of target interactions is NP-hard in $d$ in the case of $n=2$ subensembles.
First, we prove the existence of a family of native Hamiltonians for which the Hamiltonian engineering problem with subensemble control is equivalent to the separability problem, hence proving its NP-hardness in $d$~\cite{Horodecki1996entanglement, Gurvits2003locc}.
Then, using standard tools from representation theory, we show that the Hamiltonian engineering problem with subensemble control, in the general case, is strictly harder than entanglement transformation problems with LOCC operations.
Interestingly, this points out that the necessary conditions in Thm.~\ref{th:suff}, while not being sufficient, are a state-of-the-art result for the interconvertibility of mixed states.

\subsection{Mapping to the separability problem and NP-hardness}

Let us consider the Hamiltonian engineering problem with subensemble control for $n=2$ subensembles.
In particular, given the native Hamiltonian
\begin{equation}
    H = \sum_{i<j\in A}J_{ij}h_{ij}^{AA} + \sum_{i\in A,j\in B}J_{ij}h_{ij}^{AB} + \sum_{i<j\in B}J_{ij}h_{ij}^{BB} \ ,
\end{equation}
we wish to determine whether it is possible to engineer 
\begin{equation}
    \tilde{H} = \sum_{i<j\in A}J_{ij}\tilde{h}_{ij}^{AA} + \sum_{i\in A,j\in B}J_{ij}\tilde{h}_{ij}^{AB} + \sum_{i<j\in B}J_{ij}\tilde{h}_{ij}^{BB} \ ,
\end{equation}
using pulse sequences with subensemble control.
As specified in the main text, we assume each two-body interaction $h_{ij}$ to be independent of $(i,j)$.
Since we only consider the leading order of the Floquet-Magnus expansion, we assume each two-body term $h_{ij}$ to strictly consist of two-body terms, \textit{i.e.} $h^{AA}_{ij} = \sum_{\mu\nu}g^{AA}_{\mu\nu}\lambda_{i}^{\mu}\otimes\lambda_{j}^{\nu}$ and similarly for $AB, BB$, where $g^{AA, AB, BB}_{\mu\nu}$ are interaction matrices and $\lambda_{i}^{\mu}$ with $\mu\in[1,d^2-1]$ are $d\times d$ generalized Gell-Mann matrices, \textit{i.e.} the qudit generalization of Pauli matrices~\cite{Choi17qudit}.
Applying the pulse sequence $\{P^{(k)}, \tau_k\}$, where $P^{(k)} = (\bigotimes_{i \in A} p_{A,i}^{(k)})\otimes( \bigotimes_{j \in B} p_{B,j}^{(k)})$, the resulting interactions are
\begin{equation}
    \tilde{h}_{ij}^{AA} = \sum_k \frac{\tau_k}{\tau}(u_{A,i}^{(k)}\otimes u_{A,j}^{(k)})^{\dagger} h_{ij}^{AA} (u_{A,i}^{(k)}\otimes u_{A,j}^{(k)}) \ , \ \tilde{h}_{ij}^{AB} = \sum_k \frac{\tau_k}{\tau}(u_{A,i}^{(k)}\otimes u_{B,j}^{(k)})^{\dagger} h_{ij}^{AB} (u_{A,i}^{(k)}\otimes u_{B,j}^{(k)}) \ ,
\end{equation}
where $u_{A,i}^{(k)} = \prod_{k'=1}^k p^{(k)}_{A,i}$ and similarly for $B$ and $\tilde{h}^{BB}_{ij}$.

We now prove the existence of a set of interactions $\{h_{ij}^{AA}, h_{ij}^{AB}, h_{ij}^{BB}\}$ such that determining if $\{\tilde{h}_{ij}^{AA}, \tilde{h}_{ij}^{AB}, \tilde{h}_{ij}^{BB}\}$ is engineerable is equivalent to determining whether a two-qudit density matrix $\tilde{\rho}_{AB}$ is separable.
The separability problem consists in determining whether a target two-qudit density matrix $\tilde{\rho}_{AB}$ can be expressed as 
\begin{equation}
\label{eqapp:separability}
    \tilde{\rho}_{AB} = \sum_k p_k (u^{(k)}_A\otimes u^{(k)}_B)^{\dagger}\rho_{AB} (u^{(k)}_A \otimes u^{(k)}_B) \ ,
\end{equation}
where $\rho_{AB} = \ket{\psi}\bra{\psi}_{A}\otimes \ket{\phi}\bra{\phi}_{B}$ is a fixed two-qudit pure pure product state, and $p_k>0$ such that $\sum_k p_k = 1$.
The form of Eq.~\eqref{eqapp:separability} resembles the one of a local Hamiltonian engineering problem: this mapping can be achieved by defining $p_k = \tau_k/\tau$ and
\begin{equation}
\label{eqapp:DM_mapping}
    {\rho}_{AB} = (\lambda_{s}I_{AB}-{h}_{AB})/\lambda_s d^2 \ , \ \tilde{\rho}_{AB} = (\lambda_{s}I_{AB}-\tilde{h}_{AB})/\lambda_s d^2 \ ,
\end{equation}
where $\lambda_s$ is the smallest eigenvalue of ${h}_{AB}$ and $I_{AB}$ is the identity, taking the convention $\text{tr}[{h}_{AB}]=0$.
However, two-qudit density matrices generally decompose as
\begin{equation}
    {\rho}_{AB} = \frac{1}{d^2}\left( I_{AB} + \sum_{\mu} (\text{tr}[\rho_{AB}\lambda_A^{\mu}]\lambda^{\mu}_A + \text{tr}[\rho_{AB}\lambda_B^{\mu}]\lambda^{\mu}_B) + \sum_{\mu\nu} \text{tr}[\rho_{AB}\lambda_A^{\mu}\lambda_B^{\nu}]\lambda^{\mu}_A\otimes\lambda_B^{\nu}\right) \ ,
\end{equation}
whereas in the local Hamiltonian engineering problem one typically disregards single-qudit terms~\cite{Bennett02HamEng}.
This distinction lies at the core of the tractability of the local Hamiltonian engineering problem: the absence of single-qudit terms excludes native Hamiltonians $h_{AB}$ that map to two-qudit pure product states.
Therefore, there is no instance of the local Hamiltonian engineering problem that maps to the separability problem.

We now show that a set of instances of the Hamiltonian engineering problem with control on $n=2$ subensembles admits a similar mapping to Eq.~\eqref{eqapp:separability}, but that includes every possible $\rho_{AB}$.
To achieve this, we note that intra-subensemble two-body Hamiltonians of the following form transform as single-qudit Hamiltonians under global pulse sequences
\begin{equation}
\label{eqapp:single_qudit_H}
    h_{ij}^{AA,BB} = \sum_{c}g_{c}'^{AA,BB}\sum_{ab}f_{abc}\lambda_i^{a}\otimes \lambda_{j}^{b} \ ,
\end{equation}
where $g_c'^{AA,BB}$ are arbitrary $d^2-1$-dimensional vectors, and $f_{abc}$ are the antisymmetric structure constants defined by the commutation relations of the generalized Gell-Mann matrices
\begin{equation}
    [\lambda^a, \lambda^b] = 2i\sum_c f_{abc}\lambda^c \ .
\end{equation}
In particular, given the single-qudit Hamiltonians $h'^{AA,BB}_i = \sum_c g'_c \lambda^c_i$, we have
\begin{equation}
    \sum_k \frac{\tau_k}{\tau} (u^{(k)}_i\otimes u^{(k)}_j)^{\dagger} h_{ij}(u^{(k)}_i\otimes u^{(k)}_j) = \sum_{c} \tilde{g}'_c\sum_{ab}f_{abc}\lambda_i^{a}\otimes \lambda_{j}^{b} \iff \sum_k \frac{\tau_k}{\tau} u^{(k)\dagger}_i h_{i}'u^{(k)}_i = \sum_{c} \tilde{g}'_c\lambda_i^{c} \ .
\end{equation}
This can be proven in two steps.
First, since
\begin{equation}
    [u^{\dagger}\lambda^a u, u^{\dagger}\lambda^b u] = 2i\sum_c f_{abc}u^{\dagger}\lambda^c u \ ,
\end{equation}
and knowing that $u^{\dagger}\lambda^a u = \sum_{a'}[O_u]_{aa'}\lambda^{a'}$, we obtain
\begin{equation}
    \sum_{a'b'}[O_u]_{aa'}[O_u]_{bb'}f_{a'b'c} = \sum_{c'}f_{abc'}[O_u]_{c'c} \ .
\end{equation}
From this, we prove that the operator $\sum_{ab}f_{abc}\lambda_i^{a}\otimes \lambda_{j}^{b}$ transforms under homogeneous local unitaries in the same way as $\lambda^c$
\begin{equation}
    (u_i\otimes u_j)^{\dagger}\left(\sum_{ab}f_{abc}\lambda_i^{a}\otimes \lambda_{j}^{b}\right)(u_i\otimes u_j) = \sum_{aa'bb'}f_{abc}[O_u]_{aa'}[O_u]_{bb'}\lambda_i^{a'}\otimes \lambda_{j}^{b'} = \sum_{c'}[O_u]_{cc'}\sum_{ab}f_{abc'}\lambda_i^{a}\otimes \lambda_{j}^{b} \ .
\end{equation}
This completes the proof.

As a consequence, for any two-qudit density matrix $\rho_{AB}$ in Eq.~\eqref{eqapp:separability} we can always find a triple $\{h^{AA}_{ij}, h^{AB}_{ij}, h^{BB}_{ij}\}$ that maps to it by choosing
\begin{equation}
\label{eqapp:H_from_rho}
    h^{AB}_{ij} = \sum_{\mu\nu} \text{tr}[\rho_{AB}\lambda^{\mu}_A\lambda^{\nu}_B]\lambda^{\mu}_i \otimes\lambda^{\nu}_i \ , h^{AA,BB}_{ij} = \sum_{abc}\text{tr}[\rho_{AB}\lambda_{A,B}^{c}]f_{abc}\lambda_i^a\otimes \lambda_j^b \ .
\end{equation}
Therefore, determining the engineerability of $\{h^{AA}, h^{AB}, h^{BB}\}\rightarrow\{\tilde{h}^{AA}_{ij}, \tilde{h}^{AB}_{ij}, \tilde{h}^{BB}_{ij}\}$ where $\tilde{h}^{AA,BB}_{ij}$ is mapped from an arbitrary two-qudit density matrix $\tilde{\rho}_{AB}$ using Eq.~\eqref{eqapp:H_from_rho}, and where $\{h^{AA}_{ij}, h^{AB}_{ij}, h^{BB}_{ij}\}$ is mapped from a pure product state, is as hard as determining if $\tilde{\rho}_{AB}$ is separable.
This implies that being able to solve the Hamiltonian engineering problem with control on $n=2$ subensembles requires being able to solve the separability problem, hence it is NP-hard in the local dimension $d$.

\subsection{Mapping to membership in $\text{CUT}_n^{\pm}$ and NP-hardness}

In the previous section, we showed that the Hamiltonian engineering problem with $n = 2$ subensembles is NP-hard in the qudit dimension $d$. Here, we show that in the qubit case $d = 2$, the Hamiltonian engineering problem is NP-hard in the number of subensembles $n$. We do so by mapping Hamiltonian engineering to  membership in $\text{CUT}_n^{\pm}$, which we explain below.

The membership problem in the cut polytope $\text{CUT}_n^{\pm}$ is defined as follows: given a rational symmetric matrix
\begin{align}
    Q = (q_{ab})^{n}_{a,b=1}, \,\,q_{aa} = 1,
\end{align}
we aim to decide whether
\begin{align}
    Q \in \text{CUT}_n^{\pm} \coloneq \text{conv}\{ss^T : s\in\{\pm 1\}^n\}.
\end{align}
This is related to the MAX-CUT optimization problem and is NP-hard~\cite{Pitowsky1991}. Moreover, as membership can be efficiently verified by checking a provided decomposition, this problem is NP-complete. Hamiltonian engineering similarly involves finding a decomposition of a target matrix into a convex combination of other matrices. Using this insight, we find an instance of the Hamiltonian engineering problem which reduces to this membership problem.

Specifically, we define an initial Hamiltonian and a class of target Hamiltonians parametrized by a rational symmetric matrix $Q$ such that determining engineerability of a target Hamiltonian is equivalent to deciding whether $Q \in \text{CUT}_n^{\pm}$. We consider a system with $n$ subensembles $S_a$, $a\in [1,n]$, each containing two qubits. Working in the interaction picture of Eq.~\eqref{eq:int_mat}, our initial Hamiltonian uses intra-subensemble interactions defined by
\begin{align}
    g^{S_aS_a} = \text{diag}(1, 2, 3) \equiv D
\end{align}
and inter-subensemble interactions
\begin{align}
    g^{S_aS_b} = \text{diag}{(1, 0, 0}) \equiv E \ .
\end{align}
The target Hamiltonian, given $Q$ an $n \times n$ rational, symmetric matrix with entries $q_{ab}$, is
\begin{align}
    \tilde{g}^{S_aS_a} = D, \,\, \tilde{g}^{S_aS_b} = q_{ab} E. 
\end{align}
In terms of Pauli operators, and for $S_a\neq S_b$,
\begin{align}
    h_{ij}^{S_aS_a} &= X_{i}X_{j} + 2Y_{i}Y_{j} + 3 Z_{i}Z_{j} \ , h_{ij}^{S_aS_b} = X_{i}X_{j},\label{eq:hinitialnp} \\
    \tilde{h}_{ij}^{S_aS_a} &= X_{i}X_{j} + 2Y_{i}Y_{j} + 3 Z_{i}Z_{j} \ , \tilde{h}_{ij}^{S_a S_b} = q_{ab} X_{i}X_{j}\label{eq:htargetnp}.
\end{align}

Now, we prove the following theorem,
\begin{theorem}
    Given $H$ defined by Eq.~\eqref{eq:hinitialnp} and $\tilde{H}(Q)$ defined by Eq.~\eqref{eq:htargetnp}, $\tilde{H}(Q)$ is engineerable from $H$ using subensemble control if and only if $Q \in \text{CUT}_n^{\pm}$.
\end{theorem}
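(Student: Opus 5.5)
The plan is to reduce both directions to the structure of the orthogonal rotations induced by the pulses, using the convex-combination form of Eq.~\eqref{eqend:convex_ortho}. Denote by $O_a^{(k)}\in SO(3)$ the rotation associated with the toggling-frame unitary $u^{(k)}_{S_a}$ applied to subensemble $S_a$, and write $p_k=\tau_k/\tau$. Engineerability of $\tilde H(Q)$ is then equivalent to the existence of weights $p_k$ and rotations $O_a^{(k)}$ satisfying two conditions. The first is $\sum_k p_k O_a^{(k)T} D O_a^{(k)} = D$ for every $a$. The second is $\sum_k p_k O_a^{(k)T} E O_b^{(k)} = q_{ab}E$ for every $a\neq b$.

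\emph{Necessity.} The key step, and the one I expect to be the main obstacle, is to show that preserving $D=\text{diag}(1,2,3)$ forces every $O_a^{(k)}$ to be a signed diagonal matrix. I would argue through extremal eigenvalues. Take the $(3,3)$ entry of the first condition. This gives $\sum_k p_k\, (O_a^{(k)}e_3)^T D\, (O_a^{(k)}e_3)=3$. Each summand is at most $3$, the largest eigenvalue of $D$, with equality iff $O_a^{(k)}e_3=\pm e_3$, since that eigenvalue is nondegenerate. Because the average equals the maximum, every term saturates, so $O_a^{(k)}e_3=\pm e_3$ for all $k$. Each $O_a^{(k)}$ then preserves the orthogonal plane $\text{span}(e_1,e_2)$. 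Repeating the argument on the $(2,2)$ entry, restricted to that plane, gives $O_a^{(k)}e_2=\pm e_2$, and hence also $O_a^{(k)}e_1=\pm e_1$. Thus $O_a^{(k)}=\text{diag}(s^{(k)}_{a,1},s^{(k)}_{a,2},s^{(k)}_{a,3})$ with signs in $\{\pm1\}$. Substituting into the inter-subensemble condition gives $O_a^{(k)T}EO_b^{(k)}=s^{(k)}_{a,1}s^{(k)}_{b,1}E$. Setting $s^{(k)}_a:=s^{(k)}_{a,1}$, we obtain $q_{ab}=\sum_k p_k s^{(k)}_a s^{(k)}_b$ for $a\neq b$, while $q_{aa}=1=\sum_k p_k (s_a^{(k)})^2$ holds automatically. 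Therefore $Q=\sum_k p_k s^{(k)}s^{(k)T}\in\text{CUT}_n^{\pm}$.

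\emph{Sufficiency.} Given a decomposition $Q=\sum_k p_k s^{(k)}s^{(k)T}$ with $s^{(k)}\in\{\pm1\}^n$, I would choose toggling-frame unitaries $u^{(k)}_{S_a}=I$ if $s^{(k)}_a=+1$ and $u^{(k)}_{S_a}=Y$ if $s^{(k)}_a=-1$, applied to both qubits of $S_a$. Conjugation by $Y$ sends $X\mapsto -X$, $Y\mapsto Y$, $Z\mapsto -Z$, so its rotation is $\text{diag}(-1,1,-1)\in SO(3)$. This leaves $D$ invariant term by term and multiplies $E$ by $s^{(k)}_as^{(k)}_b$. Taking $\tau_k=p_k\tau$ then reproduces $\tilde g^{S_aS_a}=D$ and $\tilde g^{S_aS_b}=q_{ab}E$.

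It remains to check that the prescribed toggling frames can be realized with a closed sequence. The individual pulses are $p^{(k)}=u^{(k)}u^{(k-1)\dagger}$, and I would append a final pulse returning each frame to the identity so that $\prod_k P^{(k)}=I$; this changes the sequence only by an infinitesimal time. Since the $g$'s are pair-independent and the $J_{ij}$ are untouched, the matrix identities above translate directly into the Hamiltonian statement, which completes the equivalence.
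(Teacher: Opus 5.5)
Your proof is correct and follows the same overall structure as the paper: reduce to the orthogonal-rotation form, show that preserving $D$ forces signed-diagonal rotations, read off a cut decomposition, and conversely realize a given decomposition with $\pi$-rotations. The real difference is in the key step of necessity.

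The paper computes
$\sum_k \frac{\tau_k}{\tau}\Vert O_a^{(k)T}DO_a^{(k)}-D\Vert_F^2=\Vert D\Vert_F^2-2\Vert D\Vert_F^2+\Vert D\Vert_F^2=0$.
In other words, the orbit of $D$ lies on a Frobenius sphere, so $D$ is an extreme point of its convex hull, and every rotated copy with $\tau_k>0$ must equal $D$. Only afterwards, and tersely, does the paper use the nondegenerate spectrum: it concludes that $O_a^{(k)}$, which commutes with $D$, is diagonal with $\pm1$ entries.

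Your Rayleigh-quotient cascade on the $(3,3)$ and then the $(2,2)$ entry merges these two steps. It uses only the diagonal entries of the constraint, and it shows exactly where the nondegeneracy $3>2>1$ enters. The paper's first step, by contrast, does not depend on the spectrum (it works for any fixed $D$), and it cleanly separates \emph{each term equals $D$} from the computation of the stabilizer. As in the paper, your saturation claim should be read for terms with $p_k>0$ only.

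In the converse, you use a $\pi$-rotation about $y$, $\text{diag}(-1,1,-1)$, where the paper uses $\text{diag}(s_a^{(k)},s_a^{(k)},1)$, a $\pi$-rotation about $z$; the choice is immaterial. Your remark on closing the sequence so that $\prod_k P^{(k)}=I$ makes explicit a point the paper leaves implicit.
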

\begin{proof}
    Recall that $\tilde{H}(Q)$ is engineerable from $H$ if there exists a pulse sequence $\{\tau_k,O_{S_a}^{(k)}\}_{a,k}$, where $O_{S_a}^{(k)} \in SO(3)$ is the adjoint orthogonal rotation associated with the $k$-th pulse on the $a$-th subensemble, such that
    \begin{align}\label{eqapp:hengnp}
        \tilde{g}^{S_aS_b} = \sum_k \frac{\tau_k}{\tau} (O_{S_a}^{(k)})^T g^{S_aS_b}O_{S_b}^{(k)}.
    \end{align}
We first prove the forward direction. Suppose $\tilde{H}(Q)$ is engineerable, so that Eq.~\eqref{eqapp:hengnp} implies
\begin{align}
    D  = \sum_k \frac{\tau_k}{\tau} (O_{S_a}^{(k)})^T D O_{S_a}^{(k)} \equiv \sum_k \frac{\tau_k}{\tau} D_{a, k}
\end{align}
for some pulse sequence $\{\tau_k,O_{S_a}^{(k)}\}_{a,k}$.

Now, $\Vert D_{a,k} \Vert_F = \Vert D\Vert_F$ as $D_{a,k}$ is an orthogonal rotation of $D$, where $\Vert \cdot \Vert_F$ indicates the Frobenius norm. Thus,
we can compute 
\begin{align}
    \sum_k \frac{\tau_k}{\tau} \Vert D_{a,k} - D\Vert_F^2 &= \sum_k \frac{\tau_k}{\tau}\Vert D_{a,k}\Vert_F^2 - 2\Tr \left(\sum_k\frac{\tau_k}{\tau} D^T_{a,k} D\right) +  \Vert D\Vert_F^2 \\
    &= \sum_k \frac{\tau_k}{\tau}\Vert D_{a,k}\Vert_F^2 - 2\Tr \left(\left(\sum_k\frac{\tau_k}{\tau} D_{a,k}\right)^T D\right) +  \Vert D\Vert_F^2 \\
    &= \Vert D\Vert_F^2 - 2\Vert D\Vert_F^2 +  \Vert D\Vert_F^2 = 0
\end{align}
where we have used $ \Vert D_{a,k} - D\Vert_F^2  = \Tr\left((D_{a,k} - D)^T (D_{a,k} - D)\right)$. As all terms on the left-hand side are non-negative, this implies that $\Vert D_{a,k} - D\Vert_F^2 = 0$ and thus 
\begin{align}
    D_{a,k} =(O_{S_a}^{(k)})^T D O_{S_a}^{(k)} = D
\end{align}
for all non-zero $\tau_k$. Thus, $O_{S_a}^{(k)}$ must consist of diagonal matrices with $\pm 1$ entries, mapping to $\pi-$rotations about the coordinate axes in $SU(2)$. So,
\begin{align}
    O_{S_a}^{(k)} = \text{diag}(\epsilon_{a,1}^{(k)}, \epsilon_{a,2}^{(k)}, \epsilon_{a,3}^{(k)})
\end{align} where $\epsilon_{a,i}^{(k)} \in \{\pm 1\}$. Let us denote $s_{a}^{(k)} = \epsilon_{a,1}^{(k)}$, so that $(O_{S_a}^{(k)})^T E O_{S_b}^{(k)} = s_{a}^{(k)} s_{b}^{(k)} E$. Eq.~\eqref{eqapp:hengnp} then implies that 
\begin{align}
    q_{ab}E = \sum_k \frac{\tau_k}{\tau}  s_{a}^{(k)} s_{b}^{(k)} E \implies q_{ab} =  \sum_k \frac{\tau_k}{\tau}  s_{a}^{(k)} s_{b}^{(k)} \implies Q =  \sum_k \frac{\tau_k}{\tau} \mathbf{s}^{(k)} (\mathbf{s}^{(k)})^T,
\end{align}
thus implying that $Q  \in \text{CUT}_n^{\pm}$.

Now, we prove the reverse direction. Suppose $Q  \in \text{CUT}_n^{\pm}$. Then, $Q =  \sum_k \frac{\tau_k}{\tau} \mathbf{s}^{(k)} (\mathbf{s}^{(k)})^T$ for some $n$-dimensional sign vectors $\mathbf{s}^{(k)}$. Each entry of $Q$ can therefore be described by $q_{ab} =  \sum_k \frac{\tau_k}{\tau}  s_{a}^{(k)} s_{b}^{(k)}$.  We can then assign rotations
\begin{align}
    O_{S_a}^{(k)} = \text{diag}(s_a^{(k)}, s_a^{(k)}, 1),
\end{align}
such that $ O_{S_a}^{(k)} \in SO(3)$ and $(O_{S_a}^{(k)})^T D O_{S_a}^{(k)} = D$. The pulse sequence $\{\tau_k,O_{S_a}^{(k)}\}_{a,k}$ realizes Eq.~\eqref{eqapp:hengnp}, so $\tilde{H}$ is engineerable from $H$.
\end{proof}

A polynomial-time algorithm determining engineerabilty for $d = 2$ and an arbitrary number of subensembles $n$ would therefore solve an NP-complete problem.

\subsection{Mapping to mixture of local unitaries}

In this section we show that the Hamiltonian engineering problem with control on $n=2$ subensembles and $d=2$ generally maps to a class of unsolved problems in entanglement theory.
To achieve this, we show that the mapping from Hamiltonian to density matrices introduced in the previous section maps arbitrary two-qubit Hamiltonians to two-qutrit states (\textit{i.e.} made of two three-level systems).
Then, we show that the set of transformations between two general density matrices is more constrained than local operations and classical communications (LOCC), breaking a standard assumption in the theory of entanglement transformations.

First, let us introduce the Fano representation of single-qudit operators~\cite{Fano1957,Fano1983,SP_2017}, which maps the transformation of spins$-s$ operators to the ones of states as
\begin{equation}
\label{eq:Fano}
    O_s = \sum_{l=0}^{2s}\sum_{m=-l}^l \text{tr}[O_d T_{l}^{(m)}] T_{l}^{(m)} \ ,
\end{equation}
where each operator $T_{l}^{(m)}$ transforms like a spin$-l$ state under $SU(2)$ transformations, and $\bra{s, m'} T_{l}^{(m)}\ket{s, m''} = C((s, s)\rightarrow l, (m', m'')\rightarrow m)$ are Clebsch-Gordan coefficients.
This yields the decomposition $O_2 = aI + \sum_{\mu} b_{\mu}\sigma^{\mu}$ for single-qubit operators, where the Pauli matrices $\sigma^{\mu}$ represent the spin$-1$ component, while the trivial spin$-0$ component is proportional to the identity operator $I = T_1^{(0)}$.
This representation allows one to compose operators subject to homogeneous $SU(2)$ transformations in the same way as angular momenta for states.
For example, the two-body Hamiltonian $h_{ij} = \sum_{\mu\nu}g_{\mu\nu}\sigma^{\mu}_{i}\sigma^{\nu}_{j}$ admits a Fano representation equivalent to the composition of two spin$-1$ states, \textit{i.e.}
\begin{equation}
    h = \text{tr}[hT_0^{(0)}]T_0^{0} + \sum_{m=-1}^1\text{tr}[hT_1^{(m)}]T_1^{m} + \sum_{m=-2}^2\text{tr}[hT_2^{(m)}]T_2^{m} \ ,
\end{equation}
where now $T_0^{(0)}\sim X_iX_j + Y_iY_j + Z_iZ_j$ is the Heisenberg-like component, $T_1^{(m)}$ are the antisymmetric interactions previously obtained in Eq.~\eqref{eqapp:single_qudit_H}, and $T_2^{(0)}\sim T_0^{(0)}-3Z_iZ_j$ are symmetric dipolar interactions. 
This decomposition is the starting point for Hamiltonian engineering problems with global control~\cite{masanes2002global}, and is equivalent to Eq.~\eqref{eq:Fano} with $s=1$, implying that any two-qubit operator $h$ under global $SU(2)$ rotations transform like a single-qutrit operator $h'$.
Therefore, using Fano representation and similar tools to the previous section, it is now possible to map any Hamiltonian engineering problem with $n=2, d=2$ subensemble control to Eq.~\eqref{eqapp:separability} under the mapping
\begin{equation}
    {\rho}_{AB} = \frac{\lambda_{s}I_{AB}-K_{AB}}{9\lambda_s}  \ , \ \tilde{\rho}_{AB} = \frac{\lambda_{s}I_{AB}-\tilde{K}_{AB}}{9\lambda_s} \ ,
\end{equation}
where $\lambda_s$ is the smallest eigenvalue of $K_{AB}$, defined as
\begin{equation}
    K_{AB} = h'_{AA} \otimes I_B + I_A \otimes h'_{BB} + h_{AB}' \ ,
\end{equation}
and $h'_{AA,BB}$ are the single-qutrit mapping of $h_{AA,BB}$, while $h'_{AB}$ is obtained by promoting the Pauli matrices in $h_{AB}$ to the corresponding spin$-1$ operators.
An equivalent definition applies to $\tilde{K}_{AB}$.

Let us comment on this result.
First, the state transformation problem of Eq.~\eqref{eqapp:separability} is here a two-qutrit problem, while the unitaries $u^{(k)}$ are restricted to $SU(2)$.
While constrained, this problem is strictly more general than the separability problem. 
Hence, the solution of the separability problem for $d=2$ does not apply here~\cite{Horodecki1996entanglement}.
Pairs of states $\{\rho, \tilde{\rho}\}$ related by Eq.~\eqref{eqapp:separability} for arbitrary $\rho_{AB}$ are said to be related by mixture of local unitaries.
Mixture of local unitaries are fundamentally harder to characterize than LOCC operations, as the equivalence between density matrices and statistical ensembels of pure states does not hold for the former.
Hence, known results for LOCC transformations~\cite{Nielsen1999entanglement,li2011entanglement,lin2024entanglement} do not apply, requiring drastically different techniques.
In particular, we expect a characterization based on local unitary invariants~\cite{Makhlin2002LU}.
Interestingly, the local unitary invariants identified in the two-qudit case of Ref.~\cite{zhou2024LU} are related to the interaction matrices $g^{AA,AB,BB}$.
Therefore, we expect our necessary condition Thm.~\ref{th:necess} to be a relevant result in this context.

\section{Details on numerical pulse sequence design}

\subsection{Hamiltonian engineering as an optimization problem}

Here, we formulate subensemble Hamiltonian engineering as an optimization problem compatible with numerical solvers. We continue to work in the interaction-matrix representation.

Suppose the initial Hamiltonian is specified by interaction matrices $\{g^{\mathrm{intra},a}\}_a$ and $\{g^{\mathrm{inter},b}\}_b$, where the former denote intra-subensemble interactions, such as $AA$ or $BB$, and the latter denote inter-subensemble interactions, such as $AB$. Similarly, let the target Hamiltonian be specified by $\{\tilde g^{\mathrm{intra},a}\}_a$ and $\{\tilde g^{\mathrm{inter},b}\}_b$. For each intra-subensemble target interaction, we decompose 
\begin{align}
    \tilde g^{\mathrm{intra},a}
    =
    \alpha_a I+\beta_a \bar g^{\mathrm{intra},a},
\end{align}
where
\begin{align}
    \alpha_a
    =
    \frac{\Tr(\tilde g^{\mathrm{intra},a})}{d^2-1},
    \qquad
    \beta_a
    =
    \left\|\tilde g^{\mathrm{intra},a}-\alpha_a I\right\|_F,
    \qquad
    \bar g^{\mathrm{intra},a}
    =
    \frac{\tilde g^{\mathrm{intra},a}-\alpha_a I}
    {\left\|\tilde g^{\mathrm{intra},a}-\alpha_a I\right\|_F},
\end{align}
and $\left \lVert A \right \rVert_F:=\left(
    \sum_{i=1}^m\sum_{j=1}^n |A_{ij}|^2
    \right)^{1/2}$ is the Frobenius norm of an operator $A$. 
Here, $I$ denotes the Heisenberg interaction matrix. Similarly, for each inter-subensemble target interaction, we write
\begin{align}
    \tilde g^{\mathrm{inter},b}
    =
    \gamma_b \bar g^{\mathrm{inter},b},
    \qquad
    \gamma_b
    =
    \left\|\tilde g^{\mathrm{inter},b}\right\|_F,
    \qquad
    \bar g^{\mathrm{inter},b}
    =
    \frac{\tilde g^{\mathrm{inter},b}}
    {\left\|\tilde g^{\mathrm{inter},b}\right\|_F}.
\end{align}

Given a pulse sequence $\{\tau_k,O_{A_a}^{(k)}\}_{a,k}$, where $O_{A_a}^{(k)}$ is the adjoint orthogonal rotation associated with the $k$-th pulse on the $a$-th subensemble, define the effective interaction matrices engineered by this sequence as
\begin{align}
    g_{\mathrm{eff}}^{\mathrm{intra},a}
    &:=
    \sum_k \frac{\tau_k}{\tau}
    (O_{A_a}^{(k)})^T g^{\mathrm{intra},a} O_{A_a}^{(k)},
    \\
    g_{\mathrm{eff}}^{\mathrm{inter},ab}
    &:=
    \sum_k \frac{\tau_k}{\tau}
    (O_{A_a}^{(k)})^T g^{\mathrm{inter},ab} O_{A_b}^{(k)}.
\end{align}
Hamiltonian engineering can then be framed as the following optimization problem:
\begin{align}\label{eq:numopt}
    \max_{\{\tau_k\},\{O_a^{(k)}\},\{v_a\},\{w_{ab}\}}
    \quad
    &\sum_a v_a+\sum_{a<b}w_{ab}
    \\
    \text{s.t.}\quad
    &
    \sum_k \frac{\tau_k}{\tau}
    (O_{A_a}^{(k)})^T g^{\mathrm{intra},a} O_{A_a}^{(k)}
    =
    \alpha_a I+v_a \bar g^{\mathrm{intra},a},
    \qquad \forall a,
    \\
    &
    \sum_k \frac{\tau_k}{\tau}
    (O_{A_a}^{(k)})^T g^{\mathrm{inter},ab} O_{A_b}^{(k)}
    =
    w_{ab}\bar g^{\mathrm{inter},ab},
    \qquad \forall a<b,
    \\
    &
    \sum_k \frac{\tau_k}{\tau}=1,
    \qquad
    \frac{\tau_k}{\tau}\geq 0.
\end{align} 
The optimal values $v_a^{\max}$ and $w_{ab}^{\max}$ give the largest achievable interaction strengths along the prescribed target directions. Since traceless interaction matrices can always be canceled by an appropriate decoupling sequence, any target satisfying
\begin{align}
    \beta_a\leq v_a^{\max},
    \qquad
    \gamma_{ab}\leq w_{ab}^{\max}
\end{align}
is engineerable by concatenating the optimal sequence with a decoupling sequence. Thus, this optimization captures the reachable region for the target interaction matrices.

To implement this numerically, we impose the constraint $\sum_k \tau_k = \tau$, as well as alignment constraints requiring the engineered interaction matrices to lie along the desired target directions. Define
\begin{align}
    \mathbf{g}^{\mathrm{intra},a}_{\mathrm{eff}}
    &:=
    \operatorname{vec}
    \left(
    g_{\mathrm{eff}}^{\mathrm{intra},a}-\alpha_a I
    \right),
    &
    \bar{\mathbf{g}}^{\mathrm{intra},a}
    &:=
    \operatorname{vec}
    \left(
    \bar g^{\mathrm{intra},a}
    \right),
    \\
    \mathbf{g}^{\mathrm{inter},ab}_{\mathrm{eff}}
    &:=
    \operatorname{vec}
    \left(
    g_{\mathrm{eff}}^{\mathrm{inter},ab}
    \right),
    &
    \bar{\mathbf{g}}^{\mathrm{inter},ab}
    &:=
    \operatorname{vec}
    \left(
    \bar g^{\mathrm{inter},ab}
    \right),
\end{align}
where $\operatorname{vec}(g)$ denotes vectorization of a matrix $g$. Since the target directions are normalized in Frobenius norm, their vectorized forms have unit Euclidean norm. We then impose tracelessness as
\begin{align}
    (P_\perp)_a\mathbf{g}^{\mathrm{intra},a}_{\mathrm{eff}}
    &=0,
    \\
    (P_\perp)_{ab}\mathbf{g}^{\mathrm{inter},ab}_{\mathrm{eff}}
    &=0,
\end{align}
where
\begin{align}
    (P_\perp)_a
    =
    I
    -
    \bar{\mathbf{g}}^{\mathrm{intra},a}
    \left(\bar{\mathbf{g}}^{\mathrm{intra},a}\right)^T,
\end{align}
and similarly
\begin{align}
    (P_\perp)_{ab}
    =
    I
    -
    \bar{\mathbf{g}}^{\mathrm{inter},ab}
    \left(\bar{\mathbf{g}}^{\mathrm{inter},ab}\right)^T.
\end{align}
These constraints can be incorporated into the numerical objective through squared penalty terms, for instance
\begin{align}
    \mathcal{L}
    &=
    -\sum_a
    \mathbf{g}^{\mathrm{intra},a}_{\mathrm{eff}}
    \cdot
    \bar{\mathbf{g}}^{\mathrm{intra},a}
    -
    \sum_{a<b}
    \mathbf{g}^{\mathrm{inter},ab}_{\mathrm{eff}}
    \cdot
    \bar{\mathbf{g}}^{\mathrm{inter},ab}
    \nonumber \\
    &\quad
    +
    \eta_0
    \left(
    \sum_k \frac{\tau_k}{\tau}-1
    \right)^2
    +
    \eta_1
    \sum_a
    \left\|
    (P_\perp)_a
    \mathbf{g}^{\mathrm{intra},a}_{\mathrm{eff}}
    \right\|^2
    +
    \eta_2
    \sum_{a<b}
    \left\|
    (P_\perp)_{ab}
    \mathbf{g}^{\mathrm{inter},ab}_{\mathrm{eff}}
    \right\|^2,
\end{align}
where $\eta_0,\eta_1,\eta_2>0$ are penalty parameters. Equivalently, maximizing the first two terms gives the largest achievable coefficients $v_i$ and $w_{ij}$, while the penalty terms enforce normalization and alignment with the desired target directions.

The same engineering condition can be written compactly by collecting all interaction matrices into a single block matrix
\begin{align}
    G
    =
    \begin{pmatrix}
        g^{\mathrm{intra},1} & g^{\mathrm{inter},12} & \cdots & g^{\mathrm{inter},1n} \\
        g^{\mathrm{inter},21} & g^{\mathrm{intra},2} & \cdots & g^{\mathrm{inter},2n} \\
        \vdots & \vdots & \ddots & \vdots \\
        g^{\mathrm{inter},n1} & g^{\mathrm{inter},n2} & \cdots & g^{\mathrm{intra},n}
    \end{pmatrix}.
\end{align}
For each pulse $k$, define the block-diagonal orthogonal matrix
\begin{align}
    O^{(k)}
    =
    \operatorname{diag}
    \left(
    O_{A_1}^{(k)},O_{A_2}^{(k)},\dots,O_{A_n}^{(k)}
    \right) \ .
\end{align}
Then the full engineering equation can be written as
\begin{align}
    \sum_k \frac{\tau_k}{\tau}
    (O^{(k)})^T G O^{(k)}
    =
     \tilde G.
\end{align}
This expresses the engineered block interaction matrix as a convex combination of points in the orbit of $G$ under local adjoint rotations.

As $G$ is an $n(d^2-1)\times n(d^2-1)$ block interaction matrix, it has at most
\begin{align}
    D=n^2(d^2-1)^2
\end{align}
real entries. Therefore, by Carathéodory's theorem, any point in the convex hull of the orbit of $G$ can be expressed as a convex combination of at most
\begin{align}
    D+1=n^2(d^2-1)^2+1
\end{align}
points. Thus, it suffices to search over pulse sequences of length at most
\begin{align}
    L\leq n^2(d^2-1)^2+1.
\end{align}

Each local orthogonal rotation $O_{A_a}^{(k)}$ must arise from the adjoint action of some $SU(d)$ unitary. We parametrize it by
\begin{align}
    O_{A_a}(\omega)_{\nu',\nu}
    =
    \frac{1}{2}
    \Tr\!\left[
    \lambda^\nu u_{A_a}^\dagger(\omega)\lambda^{\nu'}u_{A_a}(\omega)
    \right] \ ,
\end{align}
where
\begin{align}
    u_{A_a}(\theta)
    =
    \exp\left(
    -i\sum_{\nu=1}^{d^2-1}\theta_{a,\nu}\lambda^\nu
    \right) \ .
\end{align}

Thus, for a sequence of length $L$, the optimization involves the pulse weights $\tau_k/\tau$ together with $n(d^2-1)$ rotation parameters per pulse. Using the Carathéodory bound, the total number of parameters is
\begin{align}
    \left(n^2(d^2-1)^2+1\right)
    \left(n(d^2-1)+1\right) = O(n^3d^6) \ .
\end{align}
For small $n$ and $d$, this finite-dimensional nonconvex optimization problem can be approached numerically using global optimization methods such as differential evolution.

\subsection{Completeness of pulse sets}

While Hamiltonian engineering can be approached via direct numerical optimization, the problem simplifies substantially when the allowed pulses on each subensemble are restricted to a finite set. In that case, the engineering problem can be formulated as a linear program (LP)~\cite{Choi17qudit}.

It is therefore useful to identify finite sets of unitaries that are sufficiently expressive. We call a finite set of toggling-frame unitaries $\mathcal U\subset SU(d)$ \textit{complete} if, for every interaction matrix $g$, the linear span of its adjoint orbit under $\mathcal U$ agrees with the linear span of its adjoint orbit under the full group $SU(d)$:
\begin{align}
\operatorname{span}
\left\{
O_{U_i}^T g^{ij} O_{U_j}: U_i,U_j\in \mathcal U
\right\}
=
\operatorname{span}
\left\{
O_{U_i}^T g^{ij} O_{U_j}: U_i,U_j\in SU(d)
\right\}.
\end{align}
where $O_U$ denotes the adjoint orthogonal representation of $U$.

At first sight, a linear-span condition is weaker than what is directly required for Hamiltonian engineering, since Hamiltonian engineering is defined by convex combinations of toggling-frame Hamiltonians. However, if both positive and negative evolution under the initial Hamiltonian are available, then linear combinations can be implemented by convex combinations. Indeed, any signed linear combination
\begin{align}
    H_{\mathrm{eff}}
    =
    \sum_\ell c_\ell H_\ell,
    \qquad
    c_\ell \in \mathbb{R},
\end{align}
can be rewritten as
\begin{align}
    H_{\mathrm{eff}}
    =
    \sum_{\ell:c_\ell>0} |c_\ell| H_\ell
    +
    \sum_{\ell:c_\ell<0} |c_\ell|(-H_\ell).
\end{align}
After normalizing the total evolution time, this becomes a convex combination of Hamiltonians drawn from the enlarged set containing both $H_\ell$ and $-H_\ell$. Thus, when both signs are available, equality of linear spans is sufficient to preserve the reachable set up to an overall scaling.

The same reasoning applies when the relevant part of the Hamiltonian is traceless. Since traceless interaction matrices $g$ can be canceled by Hamiltonian engineering, one can simulate their negative $-g$ by using a decoupling sequence. Concretely, 
suppose a sequence $\{U^{(k)}\}_k$ cancels a Hamiltonian $H$,
\begin{align}
    \sum_k p_k (U^{(k)})^\dagger H U^{(k)} = 0.
\end{align}
The sequence can be transformed to $\{(U^{(0)})^\dagger U^{(k)}\}_k$, such that
\begin{align}
    p_0 H +  \sum_{k \geq 1} p_k (U^{(k)})^\dagger U^{(0)} H (U^{(0)})^\dagger U^{(k)} = 0,
\end{align}
implying
\begin{align}
    \sum_{k \geq 1} p_k (U^{(k)})^\dagger U^{(0)} H (U^{(0)})^\dagger U^{(k)} = -p_0 H.
\end{align}
Thus, as long as $H$ can be decoupled, then the negative direction $-H$ is also engineerable up to an overall positive rescaling. Consequently, for traceless components, a linear-span condition is sufficient even though the original engineering problem is convex.

This reduces the problem to the optimization considered above Eq.~\eqref{eq:numopt}: the magnitude of the traceful Heisenberg component is fixed, while the goal is to maximize the projection of the engineered traceless component onto the desired traceless target direction. In particular, if the optimal values $v_a^{\max}$ and $w_{ab}^{\max}$ in Eq.~\eqref{eq:numopt} are positive, then a LP that optimizes on a set of pulses that is complete in the sense defined above can find a feasible solution with 
\begin{align}
    0<v_a\leq v_a^{\max},
    \qquad
    0<w_{ab}\leq w_{ab}^{\max}.
\end{align}

We now show that if the set of toggling-frame unitaries forms a unitary $4$-design~\cite{Mele2024haar}, then it is complete in the above sense. First, let us introduce frame operators.

\begin{lemma}\label{lem:frame_span}
    Let $\{v_\alpha\}_{\alpha\in G}$ be a finite set of vectors in a finite-dimensional Hilbert space $\mathcal H$, and define the frame operator
    \begin{align}
        F
        =
        \sum_{\alpha\in G}
        \ket{v_\alpha}\bra{v_\alpha}.
    \end{align}
    Then
    \begin{align}
        \operatorname{supp}F
        =
        \operatorname{span}\{v_\alpha:\alpha\in G\},
    \end{align}
where the support $\operatorname{supp}(F)$ of a positive semidefinite operator $F$ is the subspace on which it acts nontrivially.
\end{lemma}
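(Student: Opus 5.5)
The plan is to prove the two inclusions $\operatorname{supp}F \subseteq \operatorname{span}\{v_\alpha\}$ and $\operatorname{span}\{v_\alpha\}\subseteq\operatorname{supp}F$ separately. I will use the standard facts that $F$ is Hermitian and positive semidefinite. For such an operator, $\operatorname{supp}F = (\ker F)^\perp = \operatorname{ran}F$.

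\emph{Step 1: the range of $F$ lies in the span.} For any $\ket{x}\in\mathcal H$ we have
\begin{align}
    F\ket{x} = \sum_{\alpha\in G} \braket{v_\alpha|x}\,\ket{v_\alpha},
\end{align}
which is a linear combination of the $v_\alpha$. Hence $\operatorname{supp}F=\operatorname{ran}F\subseteq \operatorname{span}\{v_\alpha\}$.

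\emph{Step 2: the kernel of $F$ is orthogonal to every $v_\alpha$.} Take $\ket{x}\in\ker F$. Then
\begin{align}
    0=\braket{x|F|x}=\sum_{\alpha\in G}|\braket{v_\alpha|x}|^2 .
\end{align}
Each term is nonnegative, so $\braket{v_\alpha|x}=0$ for all $\alpha$. Conversely, if $x$ is orthogonal to every $v_\alpha$, then the expression in Step 1 gives $F\ket{x}=0$. Therefore $\ker F = \operatorname{span}\{v_\alpha\}^\perp$.

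\emph{Step 3: conclude.} Taking orthogonal complements in the finite-dimensional space $\mathcal H$ gives
\begin{align}
    \operatorname{supp}F=(\ker F)^\perp = \left(\operatorname{span}\{v_\alpha\}^{\perp}\right)^{\perp}=\operatorname{span}\{v_\alpha\}.
\end{align}
Step 3 alone already yields the full equality, so Step 1 serves only as a consistency check.

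There is no genuine obstacle in this argument. The one point that needs care is justifying $\operatorname{supp}F=(\ker F)^\perp=\operatorname{ran}F$. This follows from the spectral theorem for the Hermitian operator $F$, and I will state it explicitly, since the lemma defines the support only informally as the subspace where $F$ acts nontrivially.
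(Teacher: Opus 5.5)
Your proof is correct and follows essentially the paper's route: both derive $\ker F=\operatorname{span}\{v_\alpha\}^\perp$ from $\langle x|F|x\rangle=\sum_\alpha|\langle v_\alpha|x\rangle|^2$, then take orthogonal complements using $\operatorname{supp}F=(\ker F)^\perp$. Your deviations are harmless. You check the reverse inclusion directly via $F|x\rangle=\sum_\alpha\langle v_\alpha|x\rangle|v_\alpha\rangle$ rather than through the positive-semidefinite equivalence $\langle x|F|x\rangle=0\iff Fx=0$. Your Step 1 range check is redundant.
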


\begin{proof}
    For any $x\in\mathcal H$,
    \begin{align}
        \braket{x,Fx}
        =
        \sum_{\alpha\in G}
        |\braket{v_\alpha,x}|^2.
    \end{align}
    Hence
    \begin{align}
        x\in \ker F
        \iff
        \braket{x,Fx}=0
        \iff
        \braket{v_\alpha,x}=0
        \quad \forall \alpha\in G.
    \end{align}
    Therefore,
    \begin{align}
        \ker F
        =
        \operatorname{span}\{v_\alpha:\alpha\in G\}^{\perp}.
    \end{align}
    Since $F$ is positive semidefinite, its support is the orthogonal complement of its kernel. Thus,
    \begin{align}
        \operatorname{supp}F
        =
        (\ker F)^\perp
        =
        \operatorname{span}\{v_\alpha:\alpha\in G\}.
    \end{align}
\end{proof}

\begin{theorem}\label{thm:four_design_complete}
    Let $\mathcal U\subset SU(d)$ be a unitary $4$-design. For each subensemble $A_a$, restrict the allowed local toggling-frame unitaries to $u_a\in \mathcal U$. Then $\mathcal U$ is complete for subensemble Hamiltonian engineering in the following sense.

    For every intra-subensemble interaction Hamiltonian $H_{A_aA_a}$,
    \begin{align}
        \operatorname{span}
        \left\{
        (u\otimes u)^\dagger H_{A_aA_a}(u\otimes u)
        :
        u\in \mathcal U
        \right\}
        =
        \operatorname{span}
        \left\{
        (u\otimes u)^\dagger H_{A_aA_a}(u\otimes u)
        :
        u\in SU(d)
        \right\}.
    \end{align}
    Moreover, for every inter-subensemble interaction Hamiltonian $H_{A_a A_b }$,
    \begin{align}
        \operatorname{span}
        \left\{
        (u\otimes v)^\dagger H_{A_aA_b}(u\otimes v)
        :
        u,v\in \mathcal U
        \right\}
        =
        \operatorname{span}
        \left\{
        (u\otimes v)^\dagger H_{A_aA_b}(u\otimes v)
        :
        u,v\in SU(d)
        \right\}.
    \end{align}
    Consequently, the set of product toggling-frame unitaries
    \begin{align}
        \mathcal U^{\times n}
        =
        \left\{
        \bigotimes_{a=1}^n u_a
        :
        u_a\in \mathcal U
        \right\}
    \end{align}
    is complete.
\end{theorem}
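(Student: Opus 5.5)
The natural route is Lemma~\ref{lem:frame_span} applied to the frame operator of each adjoint orbit, with the $4$-design property used to equate the finite-set frame operator with the Haar one. Fix an interaction Hamiltonian $H$, either $H_{A_aA_a}$ acting on $\mathcal H_d\otimes\mathcal H_d$ or $H_{A_aA_b}$. Regard operators on $\mathcal H_d^{\otimes 2}$ as vectors in the real Hilbert space of Hermitian operators with the Hilbert--Schmidt inner product.

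\textbf{Intra-subensemble case.} Define the vectors $v_u=(u\otimes u)^\dagger H(u\otimes u)$ and the frame operators
\begin{align}
F_{\mathcal U}=\frac{1}{|\mathcal U|}\sum_{u\in\mathcal U}|v_u\rangle\langle v_u| ,\qquad F_{\rm Haar}=\int_{SU(d)}du\,|v_u\rangle\langle v_u| .
\end{align}
By Lemma~\ref{lem:frame_span}, $\operatorname{supp}F_{\mathcal U}$ equals the span over $\mathcal U$. The finite-dimensional extension of the lemma to a Haar integral of PSD rank-one operators gives $\operatorname{supp}F_{\rm Haar}$ equal to the span over $SU(d)$. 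For this extension, the kernel argument is identical, plus continuity of $u\mapsto\langle v_u,x\rangle$, so that vanishing almost everywhere implies vanishing everywhere.

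It then suffices to show $F_{\mathcal U}=F_{\rm Haar}$. Write $|v_u\rangle\langle v_u|$ as a superoperator $X\mapsto v_u\,\mathrm{tr}[v_uX]$. Its matrix elements in any operator basis $\{E_\alpha\}$ are
\begin{align}
\mathrm{tr}[E_\alpha v_u]\,\mathrm{tr}[v_u E_\beta].
\end{align}
Each factor is a polynomial of degree $2$ in $u\otimes u$ and $2$ in $\overline{u\otimes u}$, that is, degree $(2,2)$ in $(u,\bar u)$. The product is therefore degree $(4,4)$ in $(u,\bar u)$. The defining property of a unitary $4$-design is that the uniform average over $\mathcal U$ of any such polynomial equals its Haar average. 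Hence $F_{\mathcal U}=F_{\rm Haar}$, the supports agree, and the spans coincide. This is exactly where the $4$-design is needed and a $2$-design would not suffice.

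\textbf{Inter-subensemble case.} Now the vectors are $v_{u,v}=(u\otimes v)^\dagger H(u\otimes v)$ with $u,v$ independent. The matrix elements of the frame operator are degree $(2,2)$ in $(u,\bar u)$ and degree $(2,2)$ in $(v,\bar v)$. Expand them as sums of products $p(u)q(v)$. The double average over $\mathcal U\times\mathcal U$ then factorizes and agrees with the Haar$\times$Haar average, since each factor has degree at most $(4,4)$ and a $4$-design is also a $t$-design for every $t\le 4$. The support argument then goes through verbatim.

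\textbf{Product set.} Completeness of $\mathcal U^{\times n}$ follows term by term. By 2-locality, every term of $H$ involves one or two subensembles, and the toggling-frame unitary $\bigotimes_a u_a$ acts on each term only through the relevant $u_a$ (or the pair $u_a,u_b$). The two cases above therefore cover every interaction matrix, matching the completeness definition in terms of $O_{U_i}^TgO_{U_j}$.

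I expect the only delicate point to be the careful degree counting. I will verify it by writing $\mathrm{tr}[E_\alpha v_u]=\mathrm{tr}[(u\otimes u)E_\alpha(u\otimes u)^\dagger H]$ explicitly, which is manifestly of degree $(2,2)$. A secondary point is justifying the Haar version of Lemma~\ref{lem:frame_span}, which I would handle by the continuity remark above.
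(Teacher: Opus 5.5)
Your proposal is correct and follows essentially the paper's proof. Both arguments combine a frame operator, Lemma~\ref{lem:frame_span}, and the $4$-design property, which matches the degree-$(4,4)$ moments in the intra case and the factorized degree-$(2,2)$ moments in $u$ and $v$ in the inter case.

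The differences are cosmetic. The paper builds the frame from the representation matrices $\mathcal E_{\mathrm{intra}}(u)=u^{\otimes 2}\otimes(u^*)^{\otimes 2}$ and only afterwards applies them to $\operatorname{vec}(H)$, whereas you build it directly from the orbit vectors $v_u$. You also justify the Haar-integral version of the lemma explicitly by continuity, which the paper uses tacitly. One minor slip: each factor $\mathrm{tr}[E_\alpha v_u]$ is degree $1$, not $2$, in $u\otimes u$ and in $\overline{u\otimes u}$; this gives the degree $(2,2)$ in $(u,\bar u)$ that you correctly conclude.
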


\begin{proof}
    We first prove the intra-subensemble statement. After vectorization, conjugation by $u\otimes u$ acts linearly as
    \begin{align}
        \mathcal E_{\mathrm{intra}}(u)
        =
        u^{\otimes 2}\otimes (u^*)^{\otimes 2}.
    \end{align}
    Therefore,
    \begin{align}
        \operatorname{vec}
        \left(
        (u\otimes u)^\dagger H_{A_aA_a}(u\otimes u)
        \right)
        =
        \mathcal E_{\mathrm{intra}}(u)^\dagger
        \operatorname{vec}(H_{A_aA_a}),
    \end{align}
    up to the convention chosen for vectorization. Thus, it suffices to show that
    \begin{align}
        \operatorname{span}
        \left\{
        \mathcal E_{\mathrm{intra}}(u):u\in \mathcal U
        \right\}
        =
        \operatorname{span}
        \left\{
       \mathcal E_{\mathrm{intra}}(u):u\in SU(d)
        \right\}.
    \end{align}

    Regard each matrix $\rho_{\mathrm{intra}}(u)$ as a vector in the Hilbert--Schmidt space of linear operators. Define
    \begin{align}
        F_\mathcal U^{\mathrm{intra}}
        &=
        \frac{1}{|\mathcal U|}
        \sum_{u\in \mathcal U}
        \ket{\mathcal E_{\mathrm{intra}}(u)}
        \bra{\mathcal E_{\mathrm{intra}}(u)},
        \\
        F_{\mathrm{Haar}}^{\mathrm{intra}}
        &=
        \int_{SU(d)}
        \ket{\mathcal E_{\mathrm{intra}}(u)}
        \bra{\mathcal E_{\mathrm{intra}}(u)}
        \,du.
    \end{align}
    The entries of
    \begin{align}
        \ket{\mathcal E_{\mathrm{intra}}(u)}
        \bra{\mathcal E_{\mathrm{intra}}(u)}
    \end{align}
    are polynomials of degree $4$ in the matrix entries of $u$ and degree $4$ in the matrix entries of $u^*$. Since $\mathcal{U}$ is a unitary $4$-design,
    \begin{align}
        F_\mathcal U^{\mathrm{intra}}
        =
        F_{\mathrm{Haar}}^{\mathrm{intra}}.
    \end{align}
    By Lemma~\ref{lem:frame_span},
    \begin{align}
        \operatorname{span}
        \left\{
        \mathcal E_{\mathrm{intra}}(u):u\in \mathcal U
        \right\}
        =
        \operatorname{span}
        \left\{
        \mathcal E_{\mathrm{intra}}(u):u\in SU(d)
        \right\}.
    \end{align}
    Applying both sides to $\operatorname{vec}(H_{A_aA_a})$ gives
    \begin{align}
        \operatorname{span}
        \left\{
        (u\otimes u)^\dagger H_{A_aA_a}(u\otimes u)
        :
        u\in \mathcal U
        \right\}
        =
        \operatorname{span}
        \left\{
        (u\otimes u)^\dagger H_{A_aA_a}(u\otimes u)
        :
        u\in SU(d)
        \right\}.
    \end{align}

    We now prove the inter-subensemble statement. After vectorization, conjugation by $u\otimes v$ acts linearly as
    \begin{align}
        \mathcal E_{\mathrm{inter}}(u,v)
        =
        (u\otimes u^*)\boxtimes (v\otimes v^*),
    \end{align}
    where $\boxtimes$ denotes the induced tensor-product action on the vectorized interaction matrix. Hence it suffices to show that
    \begin{align}
        \operatorname{span}
        \left\{
        \mathcal E_{\mathrm{inter}}(u,v):u,v\in \mathcal U
        \right\}
        =
        \operatorname{span}
        \left\{
        \mathcal E_{\mathrm{inter}}(u,v):u,v\in SU(d)
        \right\}.
    \end{align}

    Define
    \begin{align}
        F_\mathcal U^{\mathrm{inter}}
        &=
        \frac{1}{|\mathcal U|^2}
        \sum_{u,v\in \mathcal U}
        \ket{\mathcal E_{\mathrm{inter}}(u,v)}
        \bra{\mathcal E_{\mathrm{inter}}(u,v)},
        \\
        F_{\mathrm{Haar}}^{\mathrm{inter}}
        &=
        \int_{SU(d)}
        \int_{SU(d)}
        \ket{\mathcal E_{\mathrm{inter}}(u,v)}
        \bra{\mathcal E_{\mathrm{inter}}(u,v)}
        \,du\,dv.
    \end{align}
    The entries of
    \begin{align}
        \ket{\mathcal E_{\mathrm{inter}}(u,v)}
        \bra{\mathcal E_{\mathrm{inter}}(u,v)}
    \end{align}
    are polynomials of degree $2$ in the entries of $u$ and degree $2$ in the entries of $u^*$, and similarly degree $2$ in the entries of $v$ and degree $2$ in the entries of $v^*$. Thus a unitary $2$-design suffices to imply
    \begin{align}
        F_\mathcal U^{\mathrm{inter}}
        =
        F_{\mathrm{Haar}}^{\mathrm{inter}}.
    \end{align}
    Since every unitary $4$-design is also a unitary $2$-design, the equality holds. By Lemma~\ref{lem:frame_span},
    \begin{align}
        \operatorname{span}
        \left\{
        \mathcal E_{\mathrm{inter}}(u,v):u,v\in \mathcal U
        \right\}
        =
        \operatorname{span}
        \left\{
        \mathcal E_{\mathrm{inter}}(u,v):u,v\in SU(d)
        \right\}.
    \end{align}
    Applying both sides to $\operatorname{vec}(H_{A_aA_b})$ gives
    \begin{align}
        \operatorname{span}
        \left\{
        (u\otimes v)^\dagger H_{A_aA_b}(u\otimes v)
        :
        u,v\in \mathcal U
        \right\}
        =
        \operatorname{span}
        \left\{
        (u\otimes v)^\dagger H_{A_aA_b}(u\otimes v)
        :
        u,v\in SU(d)
        \right\}.
    \end{align}
    Therefore $\mathcal U$ is complete for both intra-subensemble and inter-subensemble interactions, and $\mathcal U^{\times n}$ is complete for subensemble Hamiltonian engineering.
\end{proof}

Our Theorem~\ref{thm:four_design_complete} formally proves that  a set of toggling-frame unitaries that form a unitary $4$-design is complete, and thus sufficiently expressive to use in a LP. Indeed, the isocahedral pulses investigated in Ref.~\cite{Ben2020iso} form a $4$-design in SU(2).

\section{Robustness against pulse-width errors}
Here, we comment on how robustness to PW-errors follows for qudits.
For general $d$, Eq.~\eqref{eqend:pwrot} becomes
\begin{equation}
\label{eqapp:pwrot_d}
\begin{aligned}
    O^{\dagger}_{\theta(t)}  \lambda^{\mu} O_{\theta(t)}
    = \cos(\alpha\theta(t))\lambda^{\mu}  +  \sin(\alpha\theta(t)) \frac{i[O,\lambda^{\mu}]}{2\alpha} \ ,
\end{aligned}
\end{equation}
when $\lambda^{\mu}$ is an off-diagonal Gell-Mann operator. Here, $\alpha = 1$ when $O$ and $\lambda^{\mu}$ act on the same two-level subspace, and $\alpha = \frac{1}{2}$ when the two operators overlap on one level. Otherwise, for $\lambda^{\mu}$ and $O$ with disjoint support, $\lambda^{\mu}$ is unchanged, \emph{i.e.} $\theta(t) = 0$. 

For diagonal $\lambda^{\mu} = \operatorname{diag}(h_1,\ldots,h_d)$, rotation by $O$ takes a slightly different form. Without loss of generality, suppose $O$ acts on the subspace spanned by the $m$ and $n$-th levels. We define
\begin{align}
    Z_{mn} = \ket m\bra m-\ket n\bra n, \quad
    \bar\lambda^{\mu}= \lambda^{\mu}-cZ_{mn}, \quad c = \frac{h_m-h_n}{2},
\end{align}
so that $Z_{mn}$ is a Pauli Z-type operator acting on the $m$ and $n$-th levels and $\lambda^{\mu}_{i, \mathrm{fixed}}$ commutes with $O$. Then,
\begin{align}
    O^{\dagger}_{\theta(t)}  \lambda^{\mu} O_{\theta(t)}
    &= \bar\lambda^{\mu}
    + c\left( \cos(\theta(t))Z_{i, mn}+ \sin(\theta(t)) \frac{i[O,Z_{mn}]}{2}\right).
\end{align}
In the general case, integrating these rotated operations over time similarly leads to both correctable terms, such as those $\sim \cos(\theta(t)))$ or $\sim \cos(\theta(t)))\sin(\theta(t))$, as well as squared terms $\sim \cos(\theta(t)))^2$, which can be mitigated by adjusting pulse spacings.

\section{Details on pulse sequences for two-mode spin squeezing}

In this section we provide additional details and numerical data about the two-mode spin squeezing generation via subensemble control.

\subsection{Atomic ensemble in optical cavity}

\begin{figure}
    \centering
    \includegraphics[width=\linewidth]{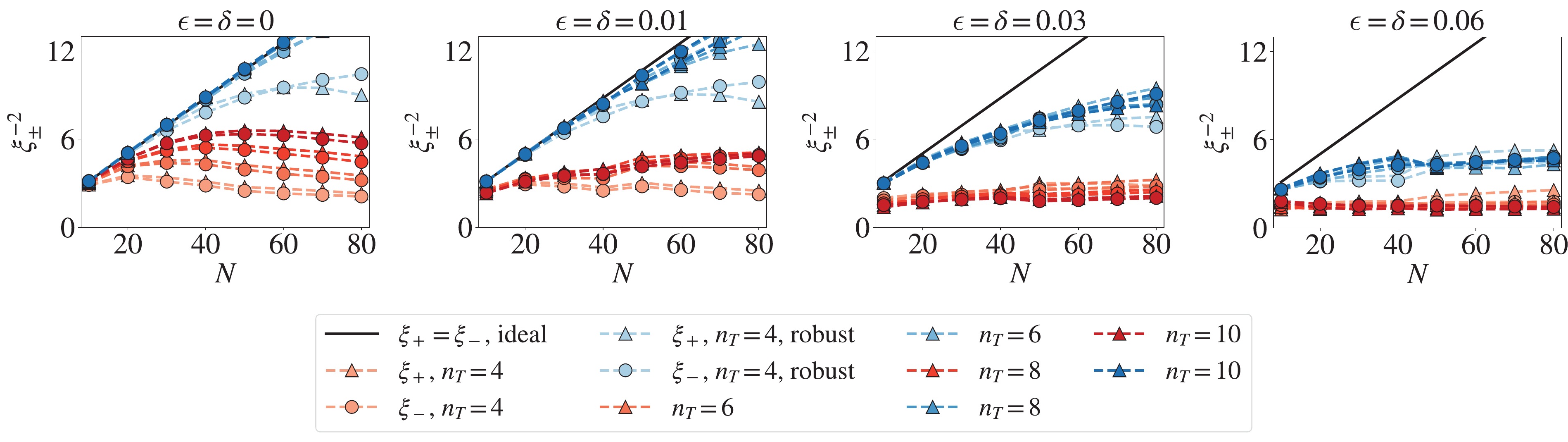}
    \caption{Maximum two-mode spin squeezing as a function of the number of atoms $N$ obtained with a collective two-axis two-spin twisting Hamiltonian. 
    Black lines represent the exact dynamics, marked red lines represent the dynamics under the bare pulse sequence presented in the main text, and blue lines the robust pulse sequence.
    Triangular markers refer to $\xi_+$, round markers refer to $\xi_-$.
    We plot the squeezing parameter for various values of the A-errors $\epsilon$, F-errors $\delta$ and number of cycles $n_T$ of the pulse sequence (including the reflected pulse sequences in the robust case).}
    \label{fig:cavity_SM}
\end{figure}

Ensembles of $N$ atoms in an optical cavity, where two levels in each atom interact with a cavity mode, behave like a collective spin with $N+1$ levels~\cite{Li2022cavity}.
These photon-mediated interactions can be used to realize the native one-axis twisting Hamiltonian
\begin{equation}
    H = \chi (S^z)^2 \ ,
\end{equation}
where $S^z = \sum_{i=1}^N Z_i$ is a collective spin operator.
By detuning the probe laser away from this two-level transition, it is possible to freeze this interaction, thus setting $\chi = 0$.
This allows to alternate the interaction between atoms and the application of pulses, effectively removing PW-errors.
The protocol we envision is therefore one where we alternate pulses and Hamiltonian evolution for short times.
The main source of decoherence in such systems is the global dephasing induced by photon loss from the cavity, which impacts the experiment every time $X$ and $Y$ rotations are applied.
Therefore, we consider the number of pulse sequence cycles $n_T$ as the main experimental cost.

While in the main text we show results for $n_T = 10$, here we also plot $n_T = 6, 8$, see Fig.~\ref{fig:cavity_SM}, and for various values of the A and F errors.
We observe that, even for the case without pulse-level errors, the robust pulse sequence achieves better performance than the bare one.
This is due to the fact that the robust pulse sequence, due to its symmetry, has no contribution from $O(\tau T)$ terms in the Floquet Magnus expansion, thus resulting in $O(\tau^2T)$ errors on top of the average Hamiltonian dynamics.
However, we observe that, already for small values of the errors, the performance of both pulse sequences is largely independent of $\tau$.
In this regime, we can indeed observe that the robust pulse sequences performs consistently better.
In particular, for the values of $N$ considered, the two-mode spin squeezing remains scalable up to $\varepsilon = \delta = 0.03$.

\subsection{Dipolar Rydberg atom array}

\begin{figure}
    \centering
    \includegraphics[width=0.78\linewidth]{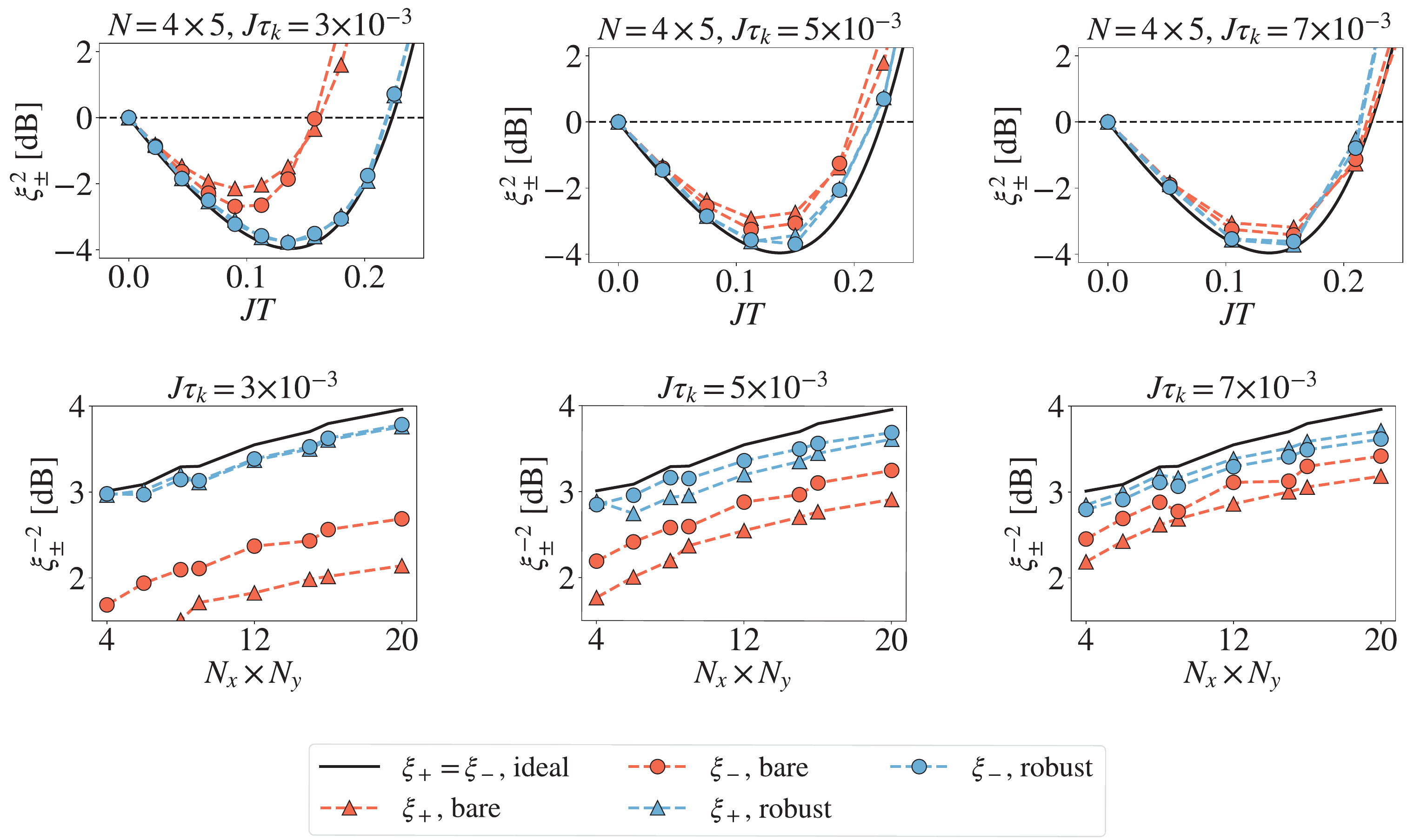}
    \caption{Two-mode spin squeezing as a function of time for a $\alpha = 3$ two-axis two-spin twisting Hamiltonian on a bipartite square lattice, as a function of time $t$ for $N = 20$ sites (upper part) and maximum value as a function of the system size $N$ (lower part).
    Black lines represent the exact dynamics, marked red lines represent the dynamics under the bare pulse sequence presented in the main text, while blue lines the robust pulse sequence.
    Triangular markers refer to $\xi_+$, round markers refer to $\xi_-$.
    We consider A- and F-errors with $\epsilon = \delta = 0.03$, and PW-errors with $\tau_p/\tau = 0.25$, in line with recent experiments.}
    \label{fig:Rydberg_SM}
\end{figure}

Neutral atoms in optical tweezers, when their dynamics is restricted to two Rydberg levels of opposite parity, interact under a long-range spin-exchange Hamiltonian
\begin{equation}
    H = \sum_{i<j}\frac{J}{|i-j|^3}(X_i X_j + Y_i Y_j) \ ,
\end{equation}
where $J\sim a^{-3}$ as a function of the spacing between tweezers $a$; as an example, $J\sim 0.25$~MHz for $a\sim15~\mu$m in the experiment of Ref.~\cite{Bornet2023squeezing}.
In that work, this Hamiltonian has been demonstrated to generate spin squeezing similarly to the collective one-axis twisting Hamiltonian.
Different from the previous case, here interactions between Rydberg levels are always on, so we need to take PW-errors into account.

Let us first comment on the choice of parameters.
In order to generate spin squeezing under the engineered dynamics, we wish to apply at least a few cycles of the pulse sequence before the time at which the maximum squeezing is reached.
From numerical simulations of the two-spin two-axis twisting Hamiltonian with $\alpha = 3$ and $N=20$, this happens at the rescaled total time $JT \simeq 0.2$.
Assuming a pulse width of $25\%$ of the inter-pulse distance, \emph{i.e.} $\tau_p = \tau_k/4$, in line with experimental demonstration~\cite{Bornet2023squeezing,Scholl22Rydberg}, the total cycle time is $7.5\times \tau_k$.
By keeping the minimum duration of pulses fixed, we wish to decrease $J\tau_k$ by decreasing $J$.
This can be done by slightly increasing the lattice spacing.
We take $J\tau_k \in [3\times10^{-3}, 7\times10^{-3}]$, which, assuming $\tau_k = 60~$ns, results in a lattice spacing of at most $a\simeq 25~\mu$m.

The robust pulse sequence is implemented using the reflection explained in the main text.
To further mitigate the PW-errors unaffected by the reflection, we use three additional techniques:
(\emph{i}) after the application of each partially reflected pulse sequence, we switch pulse sequences between $A$ and $B$, and (\emph{ii}) we correct the pulse spacing.
In particular, we choose $\{\tau/6, \tau/6, \tau/6, \tau/6, \tau/6, \tau/6\} = \{\tau/6+\tau_p/2, \tau/6-\tau_p/4, \tau/6, \tau/6, \tau/6-\tau_p/4, \tau/6+\tau_p/2\}$.
With this choice, and by reflecting the two pulse sequences, the resulting average Hamiltonians are 
\begin{equation}
\begin{split}
    &\tilde{H}_{AA,BB}= \frac{4\tau + 7\tau_p}{6(\tau + \tau_p)}\sum_{i<j}\frac{J}{|i-j|^{3}}(X_iX_j+Y_iY_j+Z_iZ_j)\\
    \tilde{H}&_{AB}= \frac{4\tau + 7\tau_p}{6(\tau + \tau_p)}\sum_{i\in A, j\in B}\frac{J}{|i-j|^{3}}\left(\frac{8\tau + 13\tau_p}{8\tau + 14\tau_p} X_iX_j+  Y_iY_j \right) \ .
\end{split}
\end{equation}
For $\tau_p = 0$, the expressions above reduce to the ones in the main text.
While PW-errors introduce an anisotropy between $XX$ and $YY$ interactions, it is generally weak: 
for the value considered $\tau_p/\tau = 1/4$ the anisotropy is $\frac{8\tau + 13\tau_p}{8\tau + 14\tau_p} = 45/ 46$, and it does not appear to affect the numerical results as much as other sources of errors.

In Fig.~\ref{fig:Rydberg_SM} we report additional numerical results about this example, including values of $\tau$ larger than in the main text.
We observe that decreasing $\tau$ increases errors in the bare pulse sequence, while the robust pulse sequence is largely unaffected by errors for our choice of parameters.
Therefore, robust pulse sequences enable the use of higher time resolution to prepare spin squeezed states than the bare pulse sequence, crucial for larger system sizes, where we cannot use exact simulations to calibrate for the maximal generation of spin squeezing.
In this regime, pulse errors could be even more detrimental, as observed in the optical cavity case.
The future study of this regime is further motivated by the numerical evidence of scalable spin squeezing we present in the lower panel of Fig.~\ref{fig:Rydberg_SM}.

\end{document}